\newcommand{\keywords}[1]{\par\addvspace\baselineskip
\noindent\keywordname\enspace\ignorespaces#1}
\newtheorem{thm}{\protect\theoremname}
\newtheorem{prop}[thm]{Proposition}
\newtheorem{lem}[thm]{Lemma}
\newtheorem{defin}[thm]{\protect\definname}
\newtheorem{observ}[thm]{\protect\observname}
\newtheorem{cnj}[thm]{Conjecture}
\providecommand{\theoremname}{Theorem}
\providecommand{\definname}{Definition}
\providecommand{\observname}{Observation}
\providecommand{\corolname}{Corollary}
\providecommand{\problemname}{Problem}
\newcommand{\nix}[1]{{}}
\begin{document}

\mainmatter  % start of an individual contribution

% first the title is needed
\title{A Note on Optimality of Quantum Circuits over Metaplectic Basis}

% a short form should be given in case it is too long for the running head
\titlerunning{On Optimality of Quantum Circuits over Metaplectic Basis}

% the name(s) of the author(s) follow(s) next
%
% NB: Chinese authors should write their first names(s) in front of
% their surnames. This ensures that the names appear correctly in
% the running heads and the author index.
%
\author{Alex Bocharov$^{1}$}
\authorrunning{Alex Bocharov}
% (feature abused for this document to repeat the title also on left hand pages)

% the affiliations are given next; don't give your e-mail address
% unless you accept that it will be published

\institute{
$^1$Quantum Architectures and Computations Group, Microsoft Research, Redmond, WA (USA)\\
alexeib@microsoft.com\\
\url{http://research.microsoft.com/en-us/groups/quarc}}

%
% NB: a more complex sample for affiliations and the mapping to the
% corresponding authors can be found in the file "llncs.dem"
% (search for the string "\mainmatter" where a contribution starts).
% "llncs.dem" accompanies the document class "llncs.cls".
%

\toctitle{}
%\tocauthor{Authors' Instructions}
\maketitle

\begin{abstract}
Metaplectic quantum basis is a universal multi-qutrit quantum basis, formed by the ternary Clifford group and the axial reflection gate
$R=|0\rangle \langle 0|  + |1\rangle \langle 1| - |2\rangle \langle 2|$. It is arguably, a ternary basis with the simplest
geometry.
 Recently Cui, Kliuchnikov, Wang and the Author have
proposed a compilation algorithm to approximate any two-level Householder reflection to precision $\varepsilon$ by a metaplectic circuit of $R$-count at most
$C \, \log_3(1/\varepsilon) + O(\log \log 1/\varepsilon)$ with $C=8$. A new result in this note takes the constant down to $C=5$ for non-exceptional target reflections under a certain credible number-theoretical conjecture. The new method increases the chances  of obtaining a truly optimal circuit but may not guarantee the true optimality.
Efficient approximations of an important ternary quantum gate proposed by Howard, Campbell and others is also discussed.  Apart from this, the note is mostly didactical: we demonstrate how to leverage Lenstra's integer geometry algorithm from 1983 for circuit synthesis.

\keywords{quantum computer,topological quantum computing,quantum compilation}

\end{abstract}

\section{Introduction}

Metaplectic basis is one of the two simple universal quantum bases available on ternary (multi-qutrit) quantum computer (cf. \cite{BCRS},\cite{BMS}). In has been realized, in particular, by braiding and topological measurement within a framework of certain weakly-integral non-abelian anyons (\cite{CuiHongWang},\cite{CuiWang}). This makes the task of synthesizing efficient quantum circuits in the basis important in the context of topological quantum computation. In \cite{BCKW} we have developed effective methods for approximating arbitrary multi-qutrit unitaries by efficient metaplectic circuits. A certain probabilistic search algorithm has been at the core of that methodology.
 It was designed to $\varepsilon$-approximate a two-level unitary vector in $\mathbb{C}^{3^n}$ by a metaplectic circuit with $R$-count in
$4 \, \log_3(1/\varepsilon) + O(\log \log 1/\varepsilon)$.

It follows that given a two-level vector $u \in \mathbb{C}^{3^n}$ the \emph{Householder reflection}

\[
I^{\otimes n} - 2 |u\rangle \langle u|
\]
\noindent can be so approximated at the $R$-count in $8 \, \log_3(1/\varepsilon) + O(\log \log 1/\varepsilon)$.

In this note we propose a deterministic-search algorithm that looks for a truly optimal approximation of a two-level unitary vector. Let $v$ be such a vector and $\varepsilon>0$ be an arbitrarily small target precision. If $v$ is in general position with respect of Eisenstein lattice introduced in section \ref{subsec:eisen:lattice}, then (under a certain plausible number-theory conjecture) the algorithm finds an $\varepsilon$-approximation for $v$ as a metaplectic circuit with $R$-count at most ${\bf 5/2} \, \log_3(1/\varepsilon) + O(\log \log 1/\varepsilon)$. The approximation found in this class has the truly optimal depth if we have a certain kind of oracle to effectively solve any \emph{norm equation} over the ring of cyclotomic integers of degree $3$. In practice we can build such an oracle that would be as efficient as any available procedure for factorization of rational integers. Unless $(v,\varepsilon)$ forms an \emph{exceptional pair} as defined in section \ref{subsec:unduc:except}, the algorithm terminates in classical time that is polylogarithmic in $1/\varepsilon$.

The evolution from the probabilistic search to deterministic search has become typical, recently, in the multi-qubit circuit synthesis (see, for example \cite{RoSelinger}, \cite{Ross}, \cite{BBG}). It may be  strongly related to general lattice reduction methods, as suggested in \cite{KBRY}.

The note is organized as follows: the preliminaries on the Clifford+$R$ basis are given in section \ref{sec:prelim}; this is followed by the description and analysis of the synthesis method in section \ref{sec:main:lemma}; section \ref{sec:P9} describes a special case method for approximating the important ternary $P_9$ gate that had been introduced in \cite{CampbellEtAl}, \cite{Howard}  and recently used to build efficient ternary arithmetics circuits in \cite{BCRS}.

\section{Preliminaries} \label{sec:prelim}
In this section we focus on facts about single-qutrit Clifford group and its extension into a universal metaplectic basis.

Let $\{|0\rangle,|1\rangle,|2\rangle\}$ be the standard computational basis for a qutrit.
Let $\omega_3 = e^{2 \pi \, i/3}$ be the third primitive root of unity.
The ternary \emph{Pauli} group is generated by the \emph{increment} gate
\begin{equation} \label{eq:INC}
\mbox{INC}= |1\rangle \langle 0|+|2\rangle \langle 1|+|0\rangle \langle 2|
\end{equation}
\noindent and the ternary $Z$ gate
\begin{equation} \label{eq:Z}
Z= |0\rangle \langle 0|+\omega_3 |1\rangle \langle 1|+\omega_3^2 |2\rangle \langle 2|.
\end{equation}

The single-qutrit ternary \emph{Clifford} group stabilizes the Pauli group and is generated by the ternary Hadamard gate $H$,
\begin{equation} \label{eq:H}
H = \frac{1}{\sqrt{3}} \sum \omega_3^{j\, k} |j\rangle \langle k|,
\end{equation}
\noindent and the $Q$ gate
\begin{equation} \label{eq:Q}
Q= |0\rangle \langle 0|+|1\rangle \langle 1|+\omega_3 |2\rangle \langle 2|,
\end{equation}

Compared to the binary Clifford group, $H$ is the ternary counterpart of the binary Hadamard gate, $Q$ is the counterpart of the phase gate $S$. It is seen, in particular,  that $H^2$ and $\mbox{INC}$ generate all 6 classical permutations of the standard qutrit basis.

The ternary Clifford group is finite. It can be extended to the quantum universal \emph{metaplectic} basis (see \cite{CuiWang}, \cite{BCKW}) by adding the non-Clifford axial reflection gate
\begin{equation}
R=|0\rangle \langle 0|  + |1\rangle \langle 1| - |2\rangle \langle 2|
\end{equation}

A circuit over the metaplectic basis is called a \emph{metaplectic circuit} and the number of occurrences of the $R$ gate in such circuit is called the \emph{$R$-count} of the circuit.

We recall that since as early as \cite{BarencoEtAl1995}, \cite{IkeAndMike2000}, it has been customary to synthesize ancilla-free circuits for single-qubit unitaries by splitting the target unitary into axial rotations first.
In \cite{BCKW} we have argued that in the single-qutrit case a more appropriate elementary building block is a two-level Householder reflection operator.

Given a unitary vector $u$ with at most two non-zero components, for example $u = x \, |0\rangle + y \, |1\rangle, \, x,y \in \mathbb{C}$ the corresponding Householder reflection is

\[
R_u = I - 2 |u\rangle \langle u|
\]

Any unitary from $\mbox{SU}(3)$ can be decomposed exactly into a product of at most 6 two-level Householder reflections at negligible classical cost. A core result in \cite{BCKW} is that a two-level Householder reflection can be $\varepsilon$-approximated by a metaplectic
circuit with the $R$-count at most $8 \, \log_3(1/\varepsilon) + O(\log \log 1/\varepsilon)$.

This has been derived from the fact that a two-level unitary state $u = x \, |0\rangle + y \, |1\rangle$ can be prepared at precision $\Theta(\varepsilon)$ by a metaplectic
circuit with the $R$-count at most $4 \, \log_3(1/\varepsilon) + O(\log \log 1/\varepsilon)$.

It has been conjectured in \cite{BCKW} that perhaps the latter state preparation can be done by a circuit with the $R$-count at most $5/2 \, \log_3(1/\varepsilon) + O(\log \log 1/\varepsilon)$ and that such circuit can be compiled in classical time that is polynomial in
$\log(1/\varepsilon)$. One of the goals of this note is to prove that conjecture for non-exceptional two-level unitary states.

\section{The main technical lemma} \label{sec:main:lemma}

\subsection{The meniscus} \label{subsec:meniscus}

Let $x = x_0\, |0\rangle + x_1 \, |1\rangle$  be a two-level unitary vector in the qutrit state space $\mathbb{C}^3$.

We start with a simple observation that the distance from $x$ to another unitary vector $y = y_0\, |0\rangle + y_1 \, |1\rangle + y_2 \, |2\rangle$ does not depend on $y_2$.

Indeed:
\begin{equation} \label{eq:distance:2}
|x-y|^2 = \langle x-y|x-y\rangle = |x|^2 - \langle x|y\rangle - \langle y|x\rangle + |y|^2 = 2 (1 - \mbox{Re}(\langle x|y\rangle))
\end{equation}
\noindent and $\langle x,y\rangle = x_0 \, y_0^* + x_1 \, y_1^*$ does not depend on $y_2$.

Let us regard the subspace $\mathbb{C}^2$ spanned by $|0\rangle$ and $|1\rangle$ as a four-dimensional real space with natural coordinates corresponding to
$\mbox{Re}(x_0), \mbox{Im}(x_0),\mbox{Re}(x_1),\mbox{Im}(x_1)$.

Rewriting (\ref{eq:distance:2}) a bit further we find

$\mbox{Re}(\langle x|y\rangle) = \mbox{Re}(x_0)\,\mbox{Re}(y_0) + \mbox{Im}(x_0)\,\mbox{Im}(y_0)+\mbox{Re}(x_1)\,\mbox{Re}(y_1)+\mbox{Im}(x_1)\, \mbox{Im}(y_1)$

\noindent which is precisely the length of the Euclidean projection of the vector $r[y] = (\mbox{Re}(y_0), \mbox{Im}(y_0),\mbox{Re}(y_1),\mbox{Im}(y_1))$ onto the unit vector

$r[x]=(\mbox{Re}(x_0), \mbox{Im}(x_0),\mbox{Re}(x_1),\mbox{Im}(x_1))$.

Note that $|x-y|<\varepsilon$ iff $\langle r[x],r[y]\rangle > 1-\varepsilon^2/2$. Thus we obtain a Euclidean geometric interpretation of the $\varepsilon$-closeness of a vector to the two-level unitary vector $x$ in the form of the following
\begin{observ}
A unitary vector $y$ is $\varepsilon$-close to the two-level unitary vector $x$ if and only if $\mbox{Re}(\langle x|y\rangle) > 1-\varepsilon^2/2$, if and only if the length of the Euclidean projection of the real vector $r[y]$ onto $r[x]$ is greater than $1-\varepsilon^2/2$.
\end{observ}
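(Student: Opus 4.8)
\emph{Proof proposal.} The statement merely repackages the computation already carried out around~(\ref{eq:distance:2}), so the plan is to assemble the pieces and record one mild caveat about signs. First I would note that, since $x$ and $y$ are both unit vectors, (\ref{eq:distance:2}) gives the exact identity $|x-y|^{2} = 2\,(1-\mbox{Re}(\langle x|y\rangle))$. Squaring the defining inequality, $|x-y|<\varepsilon$ is therefore equivalent to $2\,(1-\mbox{Re}(\langle x|y\rangle))<\varepsilon^{2}$, i.e. to $\mbox{Re}(\langle x|y\rangle)>1-\varepsilon^{2}/2$, which is the first ``if and only if''.

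Second, I would use the coordinate expansion already displayed: writing $\langle x|y\rangle = x_0\,y_0^{*}+x_1\,y_1^{*}$ and taking the real part yields $\mbox{Re}(\langle x|y\rangle) = \langle r[x],r[y]\rangle$, the ordinary dot product in $\R^{4}$. Because $x$ is a unit two-level vector, $r[x]$ is a unit vector of $\R^{4}$, so the orthogonal projection of $r[y]$ onto the line $\R\,r[x]$ equals $\langle r[x],r[y]\rangle\,r[x]$, whose length is $|\langle r[x],r[y]\rangle|$. In the regime of interest $\varepsilon<\sqrt{2}$ the threshold $1-\varepsilon^{2}/2$ is positive, so $|\langle r[x],r[y]\rangle|>1-\varepsilon^{2}/2$ forces $\langle r[x],r[y]\rangle$ to be positive as well (a negative value would give $\mbox{Re}(\langle x|y\rangle)<0$ and hence $|x-y|^{2}>2$); thus the absolute value may be dropped and the projection-length condition is exactly $\mbox{Re}(\langle x|y\rangle)>1-\varepsilon^{2}/2$. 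This gives the second ``if and only if''.

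The only point that needs a word of care --- and it is the closest thing here to an ``obstacle'' --- is independence of the third coordinate $y_2$: since $r[x]$ has vanishing entries in the $\mbox{Re}(y_2),\mbox{Im}(y_2)$ positions, neither $\langle r[x],r[y]\rangle$ nor the projection of $r[y]$ onto $\R\,r[x]$ depends on $y_2$, which is just the earlier remark on $|x-y|$ reread geometrically. Beyond that I anticipate no difficulty; the whole argument is bookkeeping around~(\ref{eq:distance:2}).
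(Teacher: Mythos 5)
Your argument is exactly the paper's: the observation is read off directly from the identity $|x-y|^2 = 2\,(1-\mbox{Re}(\langle x|y\rangle))$ in (\ref{eq:distance:2}) together with the coordinate identification $\mbox{Re}(\langle x|y\rangle)=\langle r[x],r[y]\rangle$, which is all the paper does before stating the observation. Your extra remark that the projection length is $|\langle r[x],r[y]\rangle|$ and that the absolute value can be dropped for $\varepsilon<\sqrt{2}$ is a harmless (and slightly more careful) addition; no issues.
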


Given a real four-dimensional unit vector $p\in \mathbb{R}^4, |p|=1$ and a small enough $\varepsilon>0$ we define the following convex body
\begin{defin} \label{defin:meniscus}
$M_{\varepsilon}(p) = \{ q \in  \mathbb{R}^4 | |q|\leq 1, \langle q,p\rangle > 1 - \varepsilon^2/2 \}$ is called the $\varepsilon$-\emph{meniscus} around the vector $p$.
\end{defin}

The importance of the meniscus $M_{\varepsilon}(r[x])$ is that it contains $r[y]$ for any subunitary $y$ that is $\varepsilon$-close to $x$. In the next subsection we reinterprect the approximation problem as the problem of enumeration of certain lattice contained in a scaled image of the meniscus $M_{\varepsilon}(r[x])$.

\subsection{The Eisenstein lattice} \label{subsec:eisen:lattice}
Let $\omega_3=e^{2 \pi \, i /3}$ denote the third root of unity and let $\mathbb{Z}[\omega_3]$ be the ring of cyclotomic integers of degree 3.

It has been shown in \cite{BCKW} that a single-qutrit unitary state $|s\rangle$ can be prepared exactly from a standard basis state (say, $|2\rangle$) by a finite metaplectic circuit if and only if the state has the form
\begin{equation} \label{eq:exact:state}
|s\rangle = (u\, |0\rangle + v\, |1\rangle + w\, |2\rangle )/\sqrt{3}^k, u,v,w \in \mathbb{Z}[\omega_3], k \in \mathbb{Z}, |u|^2+|v|^2+|w|^2 = 3^k
\end{equation}

Assuming $k$ is minimal for all equivalent representations of $|s\rangle$ of this form, then $|s\rangle$ can be prepared by a metaplectic circuit of overall depth between $2\,k-1$ and $2\,k+1$ with $R$-count at most $k+1$.

Given a state of form (\ref{eq:exact:state}) we are going to take its first two coordinates $u/\sqrt{3}^k$, $v/\sqrt{3}^k$ only and rewrite them as the four-dimensional real vector in $\mathbb{R}^4$.
Recall that $\omega_3=-1/2 + \sqrt{3}/2 \, i$ and that therefore a cyclotomic integer $u$ is of the form $a+b \, \omega_3 = (a-b/2)+b \, \sqrt{3}/2 \, i, a, b \in \mathbb{Z}$.

Therefore $r[|s\rangle]$ is a linear combination with integer coefficients of the following four basis vectors:

\begin{equation} \label{eq:v:basis}
v_1 = (1/\sqrt{3}^k, 0,0,0), v_2 =(-1/(2 \,\sqrt{3}^k), \sqrt{3}/(2 \,\sqrt{3}^k),0,0),
\end{equation}
\begin{equation} \label{eq:v:basis:contd}
v_3 =(0,0,1/\sqrt{3}^k,0),v_4=(0,0,-1/(2 \,\sqrt{3}^k), \sqrt{3}/(2 \,\sqrt{3}^k))
\end{equation}

\noindent Thus for any fixed $k$ the set of all such $r[|s\rangle]$ forms a four-dimensional integer lattice with the basis vectors shown.
In this note we call this structure the \emph{scaled Eisenstein integer lattice} with the scaling coefficient of $1/\sqrt{3}^k$ and denote it by the $\mathcal{E}_k$.
Let now $x = x_0\, |0\rangle + x_1 \, |1\rangle$ be the target two-level unitary state and suppose we have found a point $a_1 \, v_1 + a_2 \, v_2 +a_3 \, v_3 +a_4 \, v_4$ in the lattice $\mathcal{E}_k$ that also belongs to the meniscus
$M_{\varepsilon}(r[x])$ (as defined in the previous subsection). Set $u = (a_1+a_2 \, \omega_3)/\sqrt{3}^k$, $v = (a_3+a_4 \, \omega_3)/\sqrt{3}^k$. This candidate solution can be bootstrapped into a state of the form (\ref{eq:exact:state}) that is $\varepsilon$-close to $x$ if and only if there existed a cyclotomic integer $w$ such that

\begin{equation} \label{eq:norm:eq}
|w|^2 = 3^k - |u|^2 -|v|^2
\end{equation}

The converse is trivially true. Any $\varepsilon$-approximation of $x$ of the form (\ref{eq:exact:state}) generates a unique point in $\mathcal{E}_k$  and $M_{\varepsilon}(r[x])$ along with a solution $w$ of the equation (\ref{eq:norm:eq}).

We recall from \cite{BCKW} that an optimal metaplectic circuit preparing the state (\ref{eq:exact:state}) from a standard basis state has the $R$-count at most $k+1$ and overall depth in $\{2\,k-1,2\,k,2\,k+1\}$. In the above approximation procedure, if $k$ is absolutely minimal then the depth of the corresponding circuit cannot exceed the absolute minimum by more than 2. Thus the task of building a near-optimal $\varepsilon$-approximation circuit falls, characteristically, into two parts: 1) enumerating all the points of the lattice $\mathcal{E}_k$  within the meniscus $M_{\varepsilon}(r[x])$ and 2) solving the norm equation (\ref{eq:norm:eq}) for those points for which it is solvable.

 Detailed analysis of the second part of the task would be largely outside the scope of this paper. (The reader is encouraged to confer with the Appendix B of \cite{BCKW}). For completeness we briefly discuss the key conjecture behind that second part.

  Depending on the right hand side, an instance of the equation (\ref{eq:norm:eq}) can be unsolvable or easy to solve classically, or hard to solve classically. We have shown that limiting the circuit compilation to easily solvable instances only, still allows to achieve asymptotically optimal depth of the resulting circuits; however to find circuits of near-optimal depth we need to also address the hard cases (since there is always a chance that the true optimum corresponds to a hard norm equation).

For completeness we summarize an important number theory conjecture from the \cite{BCKW},Appendix A:
\begin{cnj} \label{cnj:norm:eq}
Let $k$ be arbitrarily large positive integer and let $u,v\in \mathbb{Z}[\omega_3]$ be randomly picked Eisenstein integers such that
\[
\Theta(3^{k/2}) \leq |u|^2+|v|^2 \leq 3^k
\]
Then the equation (\ref{eq:norm:eq}) is solvable for $w \in \mathbb{Z}[\omega_3]$ with probability that has uniform lower bound in $\Omega(1/k)$.
\end{cnj}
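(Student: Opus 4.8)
The plan is to translate the conjecture into a counting statement about integers representable by the norm form of $\mathbb{Z}[\omega_3]$ and then feed it two classical analytic inputs. First recall the standard criterion: a non-negative rational integer $m$ equals $|w|^2 = a^2 - ab + b^2$ for some $w = a + b\,\omega_3 \in \mathbb{Z}[\omega_3]$ if and only if every rational prime $p \equiv 2 \pmod 3$ (the primes inert in $\mathbb{Z}[\omega_3]$) occurs to an even power in $m$; the ramified prime $3$ and the split primes $p \equiv 1 \pmod 3$ impose no condition, and $m=0$ is trivially a norm. Call such $m$ \emph{good}. Since $3^k - |u|^2 - |v|^2$ is automatically a non-negative rational integer for $(u,v)$ in the prescribed annulus, equation (\ref{eq:norm:eq}) is solvable precisely when $m := 3^k - |u|^2 - |v|^2$ is good, so the conjecture is equivalent to: the fraction of pairs $(u,v) \in \mathbb{Z}[\omega_3]^2$ with $\Theta(3^{k/2}) \le |u|^2 + |v|^2 \le 3^k$ for which $3^k - |u|^2 - |v|^2$ is good is $\Omega(1/k)$.

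The first input concerns the quaternary quadratic form $Q(a,b,c,d) = (a^2 - ab + b^2) + (c^2 - cd + d^2)$, the norm form of $\mathbb{Z}[\omega_3] \oplus \mathbb{Z}[\omega_3]$ (determinant $9$), which governs $r(n) := \#\{(u,v) : |u|^2 + |v|^2 = n\}$. One checks that $Q$ is positive-definite and locally universal at every prime---the only nontrivial checks being at $2$ and $3$, where one verifies surjectivity modulo $8$ and modulo $9$ (over $\mathbb{Z}_3$ one may first put $Q$ in the shape $a^2 + 3b^2 + c^2 + 3d^2$) and lifts by Hensel---so that $r(n) > 0$ for all $n$; indeed $Q$ is universal over $\mathbb{Z}$, as one confirms directly on the critical set of the $290$-theorem. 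Moreover, writing $\theta_Q = E + C$ as a weight-$2$ form with $E$ its Eisenstein part, Siegel's formula expresses the Eisenstein coefficient $a_E(n)$ as $c\,n$ times a product of local densities uniformly bounded below by positive constants (and $= 1 + O(p^{-2})$ for $p \nmid 6n$, $Q$ being unimodular away from $3$), whence $a_E(n) \gg n$; Deligne's bound gives cusp coefficients $\ll n^{3/4+\varepsilon}$; hence $r(n) \gg n$ for all sufficiently large $n$. The second input is the Landau--Ramanujan density theorem in its Selberg--Delange form for the norm of a fixed imaginary quadratic field: $\#\{ 1 \le m \le x : m \text{ good}\} \asymp x/\sqrt{\log x}$.

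Assembling these, the number of favourable pairs is $\sum_{m \text{ good}} r(3^k - m)$ over the admissible range of $m$. Discarding all but $0 \le m \le 3^k/2$, we have $r(3^k - m) \gg 3^k$, and the number of good $m$ in $[1, 3^k/2]$ is $\asymp 3^k/\sqrt{k}$ (as $\log(3^k/2) \asymp k$), so the favourable count is $\gg 3^{2k}/\sqrt{k}$; the cusp-form contribution summed over the thin good set is only $\ll 3^{k(3/4+\varepsilon)} \cdot 3^k/\sqrt{k} = o(3^{2k}/\sqrt{k})$, so the main term prevails. The total number of admissible pairs is $\asymp 3^{2k}$ (a lattice-point count for an annulus of volume $\asymp 3^{2k}$, the inner radius $\Theta(3^{k/2})$ removing a negligible $O(3^k)$). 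The ratio is therefore $\gg 1/\sqrt{k} = \Omega(1/k)$, as claimed---in fact with a margin of $\sqrt{k}$.

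The step I expect to be the real obstacle---and the reason the statement is safest left as a conjecture---is \emph{uniformity under the law that genuinely arises in synthesis}, as opposed to the uniform law on the annulus used above. In the compilation algorithm of section \ref{sec:main:lemma} the pair $(u,v)$ is not uniform: it is a point of the scaled Eisenstein lattice $\mathcal{E}_k$ constrained to the thin meniscus $M_\varepsilon(r[x])$ around a target state $x$, so $m = 3^k - |u|^2 - |v|^2$ is confined to the much shorter interval $[0, O(\varepsilon^2 3^k)]$ and follows an $x$-dependent distribution. To port the clean density count one needs an equidistribution statement for these small meniscus lattice points modulo small moduli---excluding a conspiracy that drives $m$ into residue classes free of good integers---uniform in $k$ and robust across targets $x$. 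Establishing such arithmetic-geometric equidistribution (for instance via a sieve, or via exponential-sum estimates over $\mathcal{E}_k \cap M_\varepsilon$, in the spirit of the lattice-reduction viewpoint of \cite{KBRY}) is the hard part; the density computation above is what makes the conjecture believable.
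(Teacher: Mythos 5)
First, a framing point: the paper does not prove this statement at all --- it is Conjecture~\ref{cnj:norm:eq}, imported from Appendix~A of \cite{BCKW} and deliberately left open, with the surrounding text only explaining why weaker unconditional results (e.g.\ \cite{Sardari}) do not substitute for it when one wants $C<4$. So there is no paper proof to measure you against; the question is whether your sketch actually settles the conjecture, and it does not, for two reasons of different severity.

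The concrete false step is your representation-number bound. For $Q=(a^2-ab+b^2)+(c^2-cd+d^2)$ the theta series lies in $M_2(\Gamma_0(3))$, which is one-dimensional with \emph{no} cusp forms, and one has the exact classical formula $r(n)=12\,\sigma\!\left(n/3^{v_3(n)}\right)$. Hence $r(3^j)=12$ for every $j$: the claim ``local densities uniformly bounded below, whence $r(n)\gg n$ for all large $n$'' fails at $p=3$ (the local density there decays like $3^{-v_3(n)}$), and the appeal to Deligne's bound is moot since the cuspidal part vanishes. This is repairable --- restrict to good $m$ with $3\nmid m$, note $v_3(3^k-m)=v_3(m)$, and the Landau--Ramanujan count in $[1,3^k/2]$ is still $\asymp 3^k/\sqrt{k}$, giving density $\gg 1/\sqrt{k}$ under the \emph{uniform} law on the annulus --- but as written the key analytic input is wrong, and the exact formula makes the Eisenstein/Siegel machinery unnecessary anyway. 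The deeper issue is the one you candidly flag yourself: the conjecture is invoked in section~\ref{sec:main:lemma} (and in Proposition~\ref{prop:phi;approx}) for the specific candidates the algorithm inspects --- Eisenstein lattice points confined to the thin scaled meniscus around a fixed target, or shifted along a fixed rank-two sublattice --- not for pairs drawn uniformly from the annulus $\Theta(3^{k/2})\le|u|^2+|v|^2\le 3^k$. For those constrained families $m=3^k-|u|^2-|v|^2$ lives in a short, target-dependent range, and one would need an equidistribution-in-residue-classes statement uniform in $k$ and in the target to exclude conspiracies; nothing in your argument (or in the paper) supplies this. So what you have is a corrected-version-of-a-heuristic for an idealized sampling model, which supports the plausibility of Conjecture~\ref{cnj:norm:eq} but does not prove the statement in the sense the paper needs it --- which is precisely why the paper leaves it as a conjecture.
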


Due to recent progress in strong approximation for quadratic forms as described in \cite{Sardari}, this conjecture has become almost irrelevant for approximation techniques comparable to  \cite{BCKW}, Lemma 12. Indeed, it follows from Theorem 1.8 of \cite{Sardari} that for arbitrarily small preselected $\delta>0$, for any single-qutrit unitary vector $x$ and for arbitrarily small desired precision $\varepsilon>0$ there exists a unitary vector of the form (\ref{eq:exact:state}) within the distance $\varepsilon$ of $x$ with $k \leq (4+\delta) \log_3(1/\varepsilon) + K(\delta)$, where $K(\delta)$ is some constant depending only on $\delta$. This conclusion is not based on the conjecture and for practical purposes it is perhaps just as good as the $k \leq C\, \log_3(1/\varepsilon) + O(\log \log 1/\varepsilon), \, C=4$ bound of \cite{BCKW}.

However, at the moment the Author is not aware of any stronger result that would similarly allow to disregard Conjecture \ref{cnj:norm:eq} when one wants to establish a bound on $k$ with $C<4$ for targets $x$ that are in general position.

Starting with \cite{RoSelinger} it is customary to talk about \emph{factorization oracle}, or, more generally, about \emph{norm equation oracle} needed for the hard cases of the norm equation. If, for example, we had a quantum computer capable of running Shor's factorization algorithm \cite{Shor} then we could solve all the solvable norm equations in polynomial time; otherwise we could still go for the hard cases on a classical computer by allowing a certain failure rate caused by runtime cutoffs.

Either way, if we are able to filter out and handle solvable norm equations, then it should take polynomially many (in $k$) candidate equations with different right hand sides to find a one that can be solved in polynomial time.

\subsection{Induction in $k$ and exceptions.} \label{subsec:unduc:except}

Let $k$ be the scaling index of the scaled Eisenstein lattice $\mathcal{E}_k$.
It is convenient to apply the scaling factor of $\sqrt{3}^k$ to the meniscus $M_{\varepsilon}(p)$ described in section \ref{subsec:meniscus}.

Consider the scaled-out meniscus $M=\sqrt{3}^k \, M_{\varepsilon}(p)$ and the unscaled Eisenstein lattice $\mathcal{E}$ generated by $(1,0),(\omega_3,0),(0,1),(0,\omega_3)$. If we wanted to enumerate points in $\mathcal{E}_k \cap M_{\varepsilon}(p)$, this would
immediately translate into enumeration of the points in $\mathcal{E} \cap (\sqrt{3}^k \, M_{\varepsilon}(p))$.

\begin{defin} \label{def:k:feasible}
Let $q \in \mathcal{E} \cap \sqrt{3}^k \, M_{\varepsilon}(p)$ and cast $q$ as the pair $(u,v) \in  \mathbb{Z}[\omega_3] \times \mathbb{Z}[\omega_3]$.
The point $q$ is called \emph{$k$-feasible} if the $|w|^2 = 3^k - |u|^2 - |v|^2$ is solvable for $w \in \mathbb{Z}[\omega_3]$.
\end{defin}

An informal top-level description of an algorithm for finding an Eisenstsein approximation of the given two-level unitary state with the optimal $k$ is as follows:

iterate through $k=0,1,2,\ldots$

for every $k$ iterate through $\mathcal{E} \cap \sqrt{3}^k \, M_{\varepsilon}(p)$ until the set is exhausted or a $k$-feasible point is found; in the latter case the current $k$ is the minimal feasible one.

It is easy to conclude from the Conjecture \ref{cnj:norm:eq} that, under the Conjecture the probability that a feasible lattice point had not yet been found decays exponentially with the number of the lattice points encountered and inspected. Lemma \ref{lem:fritfully: multiply} implies that if $\mathcal{E} \cap \sqrt{3}^{k_0} \, M_{\varepsilon}(p)$ contains at least two distinct points for some $k_0$ the search for a feasible point is going to terminate with probability arbitrarily close to 1 at some $k=k_0+O(\log(k_0))$.

Before moving forward let us prove the following
\begin{lem} \label{lem:fritfully: multiply}
If the scaled meniscus $\sqrt{3}^{k_0} \, M_{\varepsilon}(p)$ contains two distinct points $y_1,y_2$ of the Eisenstein lattice, then for any integer $\ell \leq 0$ the scaled meniscus $M_{\ell}=\sqrt{3}^{k_0+2 \ell} \, M_{\varepsilon}(p)$ contains at
least $3^{\ell}+1$ points of the lattice.
\end{lem}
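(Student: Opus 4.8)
The plan is to argue that scaling the meniscus out by a factor $\sqrt{3}^{2\ell}=3^\ell$ inflates the lattice-point count multiplicatively, essentially because the scaled-out region contains a whole $3^\ell$-dilate of the original secant slab while keeping its curved cap. Concretely, let $y_1,y_2$ be the two distinct Eisenstein points guaranteed to lie in $\sqrt{3}^{k_0}M_\varepsilon(p)$, and write $d=y_2-y_1\in\mathcal{E}$, a nonzero lattice vector. First I would note that since both $y_1$ and $y_2$ satisfy $\langle\,\cdot\,,p\rangle>(1-\varepsilon^2/2)\sqrt{3}^{k_0}$ and have norm at most $\sqrt{3}^{k_0}$, the point $y_1$ is "deep" inside the cap: the segment $[y_1,y_2]$ and indeed a small neighborhood of $y_1$ lies in the meniscus. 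The key geometric observation is that the meniscus $M_\varepsilon(p)$, being the intersection of the unit ball with a half-space cutting off a thin spherical cap near $p$, has the property that the ray from $p$ (or from a near-$p$ base point) through any interior point, extended, stays in a slightly larger-radius meniscus; more usefully, $r M_\varepsilon(p)$ for $r>1$ contains the $r$-dilate of $M_\varepsilon(p)$ about the tip $p$, and that $r$-dilate contains $r$ times the chord through $y_1,y_2$.

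Then I would make this count-multiplication explicit. Put $r=3^\ell$ and consider the map $T:z\mapsto p\sqrt{3}^{k_0} + r\,(z - p\sqrt{3}^{k_0})$, the homothety with center the tip $p\sqrt{3}^{k_0}$ of the scaled meniscus and ratio $r$. Since $r\ge 1$, one checks $T\bigl(\sqrt{3}^{k_0}M_\varepsilon(p)\bigr)\subseteq \sqrt{3}^{k_0+2\ell}M_\varepsilon(p)$: the half-space condition is preserved (the cutting hyperplane passes through a point at the same signed distance, and dilation about the tip only pushes points further into the half-space), and the ball condition $|z|\le r\sqrt{3}^{k_0}=\sqrt{3}^{k_0+2\ell}$ is what the outer scaling provides. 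Now inside $\sqrt{3}^{k_0}M_\varepsilon(p)$ the whole segment from $y_1$ to $y_2$ lies in the body, hence so does the arithmetic progression $y_1, y_1+d, \dots$ for as many steps as fit; more to the point, $T$ sends the two points $y_1,y_2$ to two points whose difference is $r\,d$, and the segment between them, of lattice-length $r$ in direction $d$, contains the $r+1 = 3^\ell+1$ lattice points $T(y_1)+j\,d$ for $j=0,\dots,3^\ell$. Each of these lies in $\sqrt{3}^{k_0+2\ell}M_\varepsilon(p)$ and in $\mathcal{E}$, giving the claimed $3^\ell+1$ lattice points.

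The main obstacle I anticipate is the geometric claim that the homothety about the tip maps the scaled meniscus into its outward scaling — in particular verifying that dilating about the tip does not violate the norm bound more than the factor $r$ allows, and that all the intermediate points $T(y_1)+j\,d$ genuinely satisfy both the secant-plane inequality and $|\cdot|\le\sqrt{3}^{k_0+2\ell}$. This requires a short convexity argument: the secant region $\{q:|q|\le R,\ \langle q,p\rangle>(1-\varepsilon^2/2)R\}$ is convex, the points $T(y_1)$ and $T(y_2)$ lie in it, and hence so does every point on the segment joining them, which is exactly the set of $T(y_1)+j d$ for real $j\in[0,r]$; the only real work is checking $T(y_i)\in\sqrt{3}^{k_0+2\ell}M_\varepsilon(p)$, which reduces to a one-line estimate using $\langle y_i,p\rangle>(1-\varepsilon^2/2)\sqrt{3}^{k_0}$, $|y_i|\le\sqrt{3}^{k_0}$, and $r\ge1$. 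Everything else is bookkeeping. A minor point to handle is the degenerate possibility that $d$ is orthogonal to $p$ or that $y_1=p\sqrt{3}^{k_0}$ exactly; in the first case the progression still lies in the flat face, in the second we simply re-center the homothety at $y_2$ instead, which is interior, and the same argument goes through.
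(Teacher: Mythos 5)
Your key geometric claim fails. Write $R=\sqrt{3}^{k_0}$ and $r=3^{\ell}$. By Definition~\ref{defin:meniscus}, the scaled meniscus $\sqrt{3}^{k_0+2\ell}\,M_{\varepsilon}(p)$ is the set $\{q:\ |q|\le rR,\ \langle q,p\rangle > rR(1-\varepsilon^2/2)\}$, i.e.\ it is the dilate of $\sqrt{3}^{k_0}M_{\varepsilon}(p)$ about the \emph{origin}, and its bounding hyperplane sits at signed distance $rR(1-\varepsilon^2/2)$ from the origin. For your tip-centered homothety $T(z)=rz+(1-r)Rp$ one has $\langle T(z),p\rangle = r\langle z,p\rangle+(1-r)R\le rR+(1-r)R=R$ for every $z$ with $|z|\le R$; hence as soon as $r(1-\varepsilon^2/2)>1$ (i.e.\ for every $\ell\ge 1$ and small $\varepsilon$) the entire image $T\bigl(\sqrt{3}^{k_0}M_{\varepsilon}(p)\bigr)$ violates the half-space inequality of $M_{\ell}$ and is in fact disjoint from $M_{\ell}$, not contained in it. Your argument that ``dilation about the tip only pushes points further into the half-space'' refers to the half-space of the \emph{original} scaled meniscus; the half-space defining $M_{\ell}$ has its threshold multiplied by $r$, which the tip-centered dilation does not supply. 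A second, independent defect: the points $T(y_1)+j\,d$ are in general not Eisenstein lattice points, because $T(y_1)=r\,y_1+(1-r)Rp$ contains the translation $(1-r)Rp$ with $p$ an arbitrary real unit vector; re-centering the homothety at $y_2$, as in your degenerate case, does not repair either problem.

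The correct (and the paper's) move is the homothety centered at the origin: one has exactly $M_{\ell}=\sqrt{3}^{k_0+2\ell}M_{\varepsilon}(p)=3^{\ell}\bigl(\sqrt{3}^{k_0}M_{\varepsilon}(p)\bigr)$, the segment $[y_1,y_2]$ lies in $\sqrt{3}^{k_0}M_{\varepsilon}(p)$ by convexity, and therefore the $3^{\ell}+1$ points $j\,y_1+(3^{\ell}-j)\,y_2$, $j=0,\dots,3^{\ell}$, which are integer combinations of lattice points and each equal to $3^{\ell}$ times a convex combination of $y_1,y_2$, are distinct lattice points lying in $M_{\ell}$. Your convexity instinct is right, but it must be applied after origin-centered scaling, with the lattice points produced as integer combinations of $y_1,y_2$ rather than as translates of a non-lattice image point.
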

\begin{proof}
This easily follows from the convexity of the meniscuses. The segment $[y_1,y_2]$ belongs to $M_{0}$. We note
%that $3^{\ell}\,y_1,3^{\ell}\,y_2$ are lattice points and
that $r\,y_1+(3^{\ell}-r)\,y_2, \, r=0,\ldots,3^{\ell}$ are distinct lattice points in $M_{\ell}$. (Each of them is homothetic with coefficient $3^{\ell}$ to some point in $[y_1,y_2]$).
\end{proof}

Thus successful termination is ensured (with probability arbitrarily close to 1) as soon as we have found such $k_0$. The latter however runs into some complications analyzed below.

 We start with the following volume argument: given a $k$ in $5/2 \, \log_3(1/\varepsilon) + O(\log \log 1/\varepsilon)$, then the Euclidean 4-volume of the $M=\sqrt{3}^k \, M_{\varepsilon}(p)$ is in
 
$O(\log(1/\varepsilon)^{\nu})$ where $\nu$ is some constant derived from the leading coefficient of the $O(\log \log 1/\varepsilon)$ term. This is easily established by direct algebraic manipulation.

Thus the 4-volume of the the scaled meniscus is polylogarithmic in $1/\varepsilon$ when $k$ follows the above asymptotics. It would be tempting to say that for such $k$ the number of the Eisenstein lattice points in the scaled meniscus is in $\Theta(\log(1/\varepsilon)^{\nu})$ based on the so called Gaussian heuristics (cf. \cite{LLL}, Chapter 2.). Unfortunately the geometry here is significantly more complicated and the Gauss heuristics is known to fail in such cases.  Indeed for such $k$ the width of the scaled meniscus $M$ in the direction of $p$ is in $O(\varepsilon^{3/4} \, \log(1/\varepsilon)^{\nu/4})$ and thus $M$ is asymptotically flat in that direction. It is conceivable that $M$ may contain no Eisenstein lattice points at all or it may contain roughly
$(1/\varepsilon)^{3/4}$ lattice points (in case it contains a segment of a rank $3$ sublattice of $\mathbb{E}$)\footnote{The Author is indebted to Vadym Kliuchnikov for pointing out this complication.}.

%We note, however that, because the scaled meniscus is asymptotically flat in only one direction, the $O((1/\varepsilon)^{3/4})$ is the worst case asymptotic upper bound on the number of lattice points in the scaled meniscus.

In what follows we reserve the symbol $\lambda$ to denote some positive real constant $0 < \lambda \leq 3$.
\begin{comment}
We say that a triplet
$(p,k,\varepsilon), \, p\in \mathbb{R}^4, k\in \mathbb{Z}, \varepsilon>0$ is \emph{exceptional at level $\lambda$}
if the scaled meniscus $\sqrt{3}^k M_{\varepsilon}(p)$ contains at least $\varepsilon^{-\lambda}$ points.

In follows from the Lemma \ref{lem:fritfully: multiply} that

(a) given an exceptional triplet  $(p,k,\varepsilon)$ at level $\lambda$ and an integer $\ell \leq 0$, then the triplet $(p, k+2\, \ell,\varepsilon)$ is exceptional at a level greater than $\lambda$, and

(b) if  $\sqrt{3}^{k_0} \, M_{\varepsilon}(p)$ contains at least two distinct points, then for a fixed $\lambda$ the triplet
$(p,k,\varepsilon)$ becomes exceptional for a large enough $k\geq k_0$. This motivates the following
\end{comment}

\begin{defin}
A pair $(p \in \mathbb{R}^4,\varepsilon>0)$ is called \emph{strongly exceptional at level} $\lambda>0$ if there exists a $k\in \mathbb{Z}$ such that the scaled meniscus $\sqrt{3}^k \, M_{\varepsilon}(p)$ contains at least $\varepsilon^{-\lambda}$ Eisenstein lattice points, but $\sqrt{3}^{\ell} \, M_{\varepsilon}(p)$ contains fewer than two lattice points for any $\ell < k$.
\end{defin}

Obviously strongly exceptional pairs form an open subset in $\mathbb{R}P^3 \times \mathbb{R}$ as small enough deformations of either $p$ or $\varepsilon$ preserve the exceptionality.

Informally, given a strongly exceptional pair $(p,\varepsilon)$ and as $k$ iterates from $0$ to near
$ 4\, \log_3(1/\varepsilon)$, the search sets $\mathcal{E} \cap \sqrt{3}^k \, M_{\varepsilon}(p)$ stay empty for a long time, and then we get ``exponentially'' many lattice points all at once. It is important to note that for $k > 4\, \log_3(1/\varepsilon)$ the width of the scaled meniscus in the direction $p$ is greater than 1, and its width in any orthogonal direction is asymptotically large in $1/\varepsilon$; therefore for such $k$ the $\mathcal{E} \cap \sqrt{3}^k \, M_{\varepsilon}(p)$ is non-empty and some of its points can be found effectively by simple roundoff.

Since our goal is to find lattice approximations with $k < 4\, \log_3(1/\varepsilon)$ then any exceptional pair is an obstacle to reaching this goal in two distinct ways. First, the desired lattice points likely do not exist. Secondly, even if they do exist for some $k_1 < 4\, \log_3(1/\varepsilon)$, the set $\mathcal{E} \cap \sqrt{3}^{k_1} \, M_{\varepsilon}(p)$ may be too large to be enumerated in a reasonable time. However, when the size of this set is $>> \, k_1$, it contains a $k_1$-feasible point with probability effectively 1, and such point can be found in $O(k_1)$ iterations. Thus the second aspect of the problem only matters if we are looking for truly optimal circuit depth or truly optimal $R$-count and must enumerate all the candidates.

For better intuition consider the following important
\begin{example}
Let $u$ be a vector of the dual Eisenstein lattice $\mathcal{E}^*$, i.e. such that
$\forall v \in \mathcal{E}, \langle u,v \rangle \in \mathbb{Z}$ and let us make a technical assumption that $|u|$ and $|u|\, \sqrt{3}$ are irrational. Then there exists a $\varepsilon_1>0$ such that for any $\varepsilon < \varepsilon_1$ the $(p=u/|u|,\varepsilon)$ is a strongly exceptional pair for some $\lambda > 1$.
\end{example}

A rigorous proof of this fact is beyond the scope of this note. Here is a qualitative explanation why this is the case. As per a known property of dual lattices, let $H$ be the hyperplane through the origin orthogonal to $p$; then any vector of the original lattice belongs to a shifted hyperplane $H_{\ell}=\ell/|u|\, p + H$ for some $\ell \in \mathbb{Z}$. Suppose $\ell = \lfloor \sqrt{3}^k \, |u| \rfloor$; then $H_{\ell}$ is the closest to point $\sqrt{3}^k \, p$ among such $H_{m}$ that intersect the ball of radius $\sqrt{3}^k$ centered at origin. More specifically, if $f_k = \sqrt{3}^k \, |u| -\ell$ then $\sqrt{3}^k \, p$ is at the distance $f_k/|u|$ from $H_{\ell}$. Clearly we are not getting any lattice points in the scaled meniscus $\sqrt{3}^k \, M_{\varepsilon}(p)$ unless $\sqrt{3}^k \varepsilon^2/2 > f_k/|u|$ or, equivalently 
\begin{equation} \label{eq:k:fk}
k > 4 \, \log_3(1/\varepsilon) - 2\,(\log_3(|u|/2)- \log_3(f_k))
\end{equation}
\noindent The $-2\, \log_3(|u|)$ is asymptotically irrelevant when $\varepsilon \rightarrow 0$. The irrationality assumption makes the inequality (\ref{eq:k:fk}) meaningful. For example, we can fix a large enough $k_0 \in \mathbb{Z}$ and consider only $\varepsilon < \mbox{min} \{ 3^{-(|u|/(2 \, f_k))^2} | k = 0, \ldots k_0\}$. Then for $k < \mbox{min}(k_0, 3 \, \log_3(1/\varepsilon))$ the scaled meniscus does not contain any Eisenstein lattice vectors. As $k$ grows towards $4 \, \log_3(1/\varepsilon)$ the first non-empty intersection $\mathcal{E} \cap \sqrt{3}^k \, M_{\varepsilon}(p)$ will get on the order of $\Theta(3^{3\,k/2})$ lattice points.

Suppose $\lambda$ is close to zero (e.g. take $\lambda=0.1$ for practical purposes). We will use it as a hyperparameter in the meta-algorithm \ref{alg:iteration} below. Our intention is to have this algorithm run in classical time that is polylogarithmic in $1/\varepsilon$ in most cases, and in subexponential time $O(\varepsilon^{-\lambda})$ in the remaining non-exceptional cases.

\begin{algorithm}[H]
\caption{Iterative search for a feasible lattice point.}
\label{alg:iteration}
\algsetup{indent=2em}
\begin{algorithmic}[1]
\REQUIRE{$p \in \mathbb{R}^4$, $\varepsilon>0$, hyperparameter $0 < \lambda \leq 3/4$}
\FOR {$k=0,1,\ldots$}
\STATE {$N \gets \#|\mathcal{E} \cap \sqrt{3}^k \, M_{\varepsilon}(p)|$}
\IF {$N > \varepsilon^{-\lambda}$}
\RETURN {``Exception''} \label{il1}
\ENDIF
\FOR {$\forall$ lattice point $q \in \mathcal{E} \cap \sqrt{3}^k \, M_{\varepsilon}(p)$}
\IF {$q$ ia a $k$-feasible point}
\RETURN {$(q,k)$} \label{il2}
\ENDIF
\ENDFOR
\ENDFOR
\end{algorithmic}
\end{algorithm}

This meta-algorithm terminates with either a feasible lattice point or with an ``Exception''. Assuming Conjecture \ref{cnj:norm:eq}, the probability that it terminates with  the ``Exception'' is near zero, unless the input pair $(p,\varepsilon)$ is strongly exceptional at the level $\lambda$.

Let us make this statement rigorous and also find an asymptotical bound on the number of iterations needed for termination.

Suppose $(p,\varepsilon)$ is not strongly exceptional. Find the smallest $k_1 \in \mathbb{Z}$ such that
$1 < \#|\mathcal{E} \cap \sqrt{3}^{k_1} \, M_{\varepsilon}(p)| < \varepsilon^{-\lambda}$.
For any $k<k_1$ there can be at most one lattice point candidate to inspect. If some such point $q$ is $k$-feasible then the search would succeed with $(q,k)$.

Suppose this hasn't happened. Let us iterate through integers $\ell=0,1,\ldots$. As per the Lemma \ref{lem:fritfully: multiply}, the $\#|\mathcal{E} \cap \sqrt{3}^{k_1+2 \ell} \, M_{\varepsilon}(p)|$ grows as $\Omega(3^{\ell})$ while as per the Conjecture \ref{cnj:norm:eq} our lower bound for the probability of a lattice point to be feasible does not change significantly with $\ell$ and stays in
$\Omega(1/\log(1/\varepsilon))$. Therefore the probability of not finding a feasible lattice point in $\mathcal{E} \cap \sqrt{3}^{k_1+2 \ell} \, M_{\varepsilon}(p)$ decays exponentially with $\ell$, and, conversely the probability of finding one can be made within an arbitrarily small $\delta>0$ from 1 for some $\ell$ in $\log \log 1/\varepsilon$. Thus, with probability arbitrarily close to 1 the search should succeed for some $k_2=k_1+O(\log \log 1/\varepsilon)$ after inspecting $O(\log 1/\varepsilon)$ lattice points.

\subsection{The two-level state approximation}

The purpose of this subsection is to offer an effective proof of the following

\begin{lem} \label{lem:state:approx:complex}
Let  $|\psi\rangle = x\, |0\rangle + y \, |1\rangle, x,y \in \mathbb{C}, |x|^2+|y|^2 =1$ be a two-level unitary vector and $\varepsilon>0$  be a sufficiently small precision value.

$|\psi\rangle$ can be approximated by a unitary vector of the form
\begin{equation} \label{eq:approximant}
(u\, |0\rangle + v \, |1\rangle + w\, |2\rangle )/\sqrt{3}^k, u,v,w \in \mathbb{Z}[\omega_3]
 \end{equation}
 \noindent with integer $k$ in  $C \, \log_3(1/\varepsilon) + O(\log \log 1/\varepsilon)$ where $C$ is some positive constant.

 For a pair $(|\psi\rangle, \varepsilon)$ in general position, the constant $C$ can be constrained to $C\leq {\bf 5/2}$.

 An approximating vector of the form (\ref{eq:approximant}) can be found by a classical search algorithm with runtime that is polylogarithmic in $1/\varepsilon$.

If an oracle for solving norm equations over the ring of cyclotomic integers of degree $3$ is available, then such a vector can be found with the absolutely minimal $k$ at the classical cost that is a polylogarithm in $1/\varepsilon$ times the classical cost of executing the oracle.
\end{lem}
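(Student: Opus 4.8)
The plan is to carry the problem to the lattice-point setting of Sections \ref{subsec:meniscus}--\ref{subsec:unduc:except} and then combine a geometry-of-numbers count for the meniscus with Lemma \ref{lem:fritfully: multiply}, Conjecture \ref{cnj:norm:eq}, and Lenstra's $1983$ integer-programming algorithm for the constructive part. Put $p=r[|\psi\rangle]\in\mathbb{R}^4$. By the Observation of Section \ref{subsec:meniscus} and the dictionary of Section \ref{subsec:eisen:lattice}, an approximant of the form (\ref{eq:approximant}) at scale $k$ and precision $\varepsilon$ is exactly a $k$-feasible point of $\mathcal{E}\cap\sqrt{3}^k\,M_\varepsilon(p)$ in the sense of Definition \ref{def:k:feasible}. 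So it suffices to: (i) bound, for $(p,\varepsilon)$ in general position, the least $k$ admitting such a feasible point; (ii) show Algorithm \ref{alg:iteration} (with a roundoff fallback when it reports ``Exception'') finds one in classical time $\mathrm{poly}(\log 1/\varepsilon)$; and (iii) show that replacing its feasibility test by a cyclotomic norm-equation oracle makes the returned $k$ the absolute minimum at the stated cost. The bare ``$C$ is some positive constant'' claim is already the content of \cite{BCKW}, Lemma 12 (and Theorem 1.8 of \cite{Sardari} even gives $C=4+\delta$ for every target $|\psi\rangle$), so only the refinement to $C\le 5/2$ for general-position pairs and the complexity statements need a new argument; both of these will be conditional on Conjecture \ref{cnj:norm:eq}.

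For (i) I would first record the elementary estimate $\operatorname{vol}\big(\sqrt{3}^k\,M_\varepsilon(p)\big)=\Theta(3^{2k}\varepsilon^5)$: the scaled meniscus is a spherical cap of thickness $\Theta(\sqrt{3}^k\varepsilon^2)$ in the direction $p$ and transverse extent $\Theta(\sqrt{3}^k\varepsilon)$, so integrating the $3$-ball cross-sections gives the $\varepsilon^5$, and this is precisely why $k$ near $\tfrac{5}{2}\log_3(1/\varepsilon)$ is the natural scale (there the volume is $\Theta(1)$, and at $k=\tfrac{5}{2}\log_3(1/\varepsilon)+c\log_3\log(1/\varepsilon)$ it is $\Theta((\log 1/\varepsilon)^{2c})$). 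The delicate point is to turn volume into an honest lattice-point count: because the cap is asymptotically flat in the direction $p$ the Gaussian heuristic is unreliable in general, but for $(p,\varepsilon)$ in general position --- which I would define so as to exclude exactly the arithmetic loci of the Example, e.g. $p$ too close to a short vector of the dual lattice $\mathcal{E}^*$, where the cap either misses $\mathcal{E}$ or meets a rank-$3$ sublattice --- one can show $\#\big(\mathcal{E}\cap\sqrt{3}^k\,M_\varepsilon(p)\big)=\Omega(\log 1/\varepsilon)$ for such $k$. I would do this by an anisotropic rescaling carrying the cap to a body of volume $\Theta(1)$ and $\mathcal{E}_k$ to a lattice of covolume $\Theta((\log 1/\varepsilon)^{-2c})$, controlling the successive minima of that lattice away from the degenerate directions via the dual-lattice picture of Section \ref{subsec:unduc:except}, and then applying Minkowski together with a second-moment / equidistribution estimate over the admissible directions $p$. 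Once there are at least two lattice points at some $k_0\le\tfrac{5}{2}\log_3(1/\varepsilon)+O(\log\log 1/\varepsilon)$, Lemma \ref{lem:fritfully: multiply} gives $\ge 3^\ell+1$ lattice points in $\sqrt{3}^{k_0+2\ell}\,M_\varepsilon(p)$, and Conjecture \ref{cnj:norm:eq} makes each of them $k$-feasible with probability $\Omega(1/\log(1/\varepsilon))$; taking $\ell=O(\log\log 1/\varepsilon)$ so that $3^\ell$ exceeds $\log(1/\varepsilon)$ by any fixed factor drives the chance of having no feasible point below any preassigned $\delta>0$, so conjecturally a feasible point exists at $k=k_0+2\ell=\tfrac{5}{2}\log_3(1/\varepsilon)+O(\log\log 1/\varepsilon)$.

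For (ii) and (iii) I would run Algorithm \ref{alg:iteration} and bill its cost per value of $k$. Enumerating $\mathcal{E}\cap\sqrt{3}^k\,M_\varepsilon(p)$ is lattice-point enumeration in the fixed dimension $4$ over a rational convex body (replace $p$ by a rational approximation to $O(\log 1/\varepsilon)$ bits and $\varepsilon^2/2,\,3^k$ by rationals), which by Lenstra's $1983$ integer-programming algorithm and its enumeration refinement is done in time polynomial in the bit length of the data, i.e. $\mathrm{poly}(\log 1/\varepsilon)$; this returns both the candidate list and the count $N$. For non-exceptional pairs the analysis of Section \ref{subsec:unduc:except} keeps $N$ below $\varepsilon^{-\lambda}$ (in fact polylogarithmic) up to the first $k$ with a feasible point, so the ``Exception'' branch is not taken and only polylogarithmically many values of $k$ and candidates are ever inspected; for exceptional pairs the branch is taken after $O(\log 1/\varepsilon)$ steps and one falls back to plain roundoff at $k\approx 4\log_3(1/\varepsilon)$, still within $\mathrm{poly}(\log 1/\varepsilon)$. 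Testing $k$-feasibility of a candidate $(u,v)$ asks whether the nonnegative rational integer $3^k-|u|^2-|v|^2$ is represented by the norm form $a^2-ab+b^2$ of $\mathbb{Z}[\omega_3]$; for the polylog-runtime claim I would, following \cite{BCKW}, Appendix B, restrict attention to the ``easily solvable'' right-hand sides (those made smooth by trial division up to a polylogarithmic bound), which by the same counting as in (i) still leaves, conditionally on Conjecture \ref{cnj:norm:eq}, a feasible candidate among the $\Omega(\log 1/\varepsilon)$ inspected ones with probability $1-o(1)$, each test costing $\mathrm{poly}(\log 1/\varepsilon)$. For the strongest claim I would instead hand every enumerated candidate's norm equation to the hypothesised oracle and drop the ``Exception'' cutoff: since the outer loop scans $k=0,1,2,\dots$ and stops at the first $k$ possessing a feasible candidate, the returned $k$ is the absolute minimum, and for general-position pairs only $\mathrm{poly}(\log 1/\varepsilon)$ enumeration/oracle calls are made, giving total cost $\mathrm{poly}(\log 1/\varepsilon)$ times the oracle's running time.

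The main obstacle is the volume-to-count step in (i): upgrading the easy bound $\operatorname{vol}(\sqrt{3}^k\,M_\varepsilon(p))=\Theta(3^{2k}\varepsilon^5)$ to an honest $\Omega(\log 1/\varepsilon)$ lower bound for $\#\big(\mathcal{E}\cap\sqrt{3}^k\,M_\varepsilon(p)\big)$ at $k=\tfrac{5}{2}\log_3(1/\varepsilon)+O(\log\log 1/\varepsilon)$. The cap is asymptotically flat in the direction $p$, the Gaussian heuristic provably fails for such domains (this is the very phenomenon behind the Example and the notion of a strongly exceptional pair), so the argument must simultaneously fix a robust notion of ``general position'' for $(p,\varepsilon)$ and supply a genuine counting input --- an equidistribution statement for $\mathcal{E}$ against thin caps in generic directions, or an anisotropic geometry-of-numbers argument controlling the rescaled lattice's successive minima. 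By comparison, the reductions to Lemma \ref{lem:fritfully: multiply} and Conjecture \ref{cnj:norm:eq} and the appeal to Lenstra's algorithm are routine.
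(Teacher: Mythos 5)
Your proposal follows essentially the same route as the paper: the meniscus/Eisenstein-lattice dictionary, the volume scaling $\operatorname{vol}(\sqrt{3}^k M_{\varepsilon}(p))=\Theta(3^{2k}\varepsilon^5)$ that pins $k$ near $\tfrac{5}{2}\log_3(1/\varepsilon)$, Lemma \ref{lem:fritfully: multiply} combined with Conjecture \ref{cnj:norm:eq} to reach a feasible point within $O(\log\log 1/\varepsilon)$ additional levels, and Lenstra-type enumeration of $\mathbb{Z}^4$ points in an enclosing rational polytope for the polylogarithmic runtime and the oracle-based minimal-$k$ claim. The one step you flag as the main obstacle --- upgrading the volume estimate to an honest $\Omega(\log 1/\varepsilon)$ lattice-point count in the thin cap for generic $p$ --- is not proved in the paper either: it is absorbed into the definition of (strongly) exceptional pairs and the ``general position'' hypothesis, so no equidistribution or successive-minima argument appears there, and your plan to establish that count outright would strengthen, rather than merely reproduce, the paper's conditional argument.
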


This is a technical improvement on the Lemma 12 from \cite{BCKW}.

The main geometric tool for proving Lemma \ref{lem:state:approx:complex} is the following
\begin{lem} \label{lem:approx:subunitary}
Let $\varepsilon>0$ be an arbitrarily small precision and let $|\psi\rangle = x\, |0\rangle + y \, |1\rangle, x,y \in \mathbb{C}, |x|^2+|y|^2 =1$ be a two-level unitary vector in a qutrit.

 For a fixed integer $k$ the subunitary vectors of the form $(u\, |0\rangle + v \, |1\rangle)/\sqrt{3}^k, u,v \in \mathbb{Z}[\omega_3]$ within distance $\varepsilon$ from $|\psi\rangle$ can be enumerated effectively and deterministically. For a pair $(|\psi\rangle, \varepsilon)$ in general position such enumeration can be achieved in classical time that is polylogarithmic in $1/\varepsilon$ for any $k$ in $5/2 \, \log_3(1/\varepsilon) + O(\log \log 1/\varepsilon)$.
\end{lem}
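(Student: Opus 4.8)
The plan is to reduce the enumeration of subunitary approximants to a lattice-point enumeration problem inside the scaled meniscus and then to invoke an integer-programming / lattice-enumeration subroutine in fixed dimension, namely Lenstra's 1983 algorithm. Concretely, I would first fix $k$ and recall from Section~\ref{subsec:eisen:lattice} that the subunitary vectors $(u|0\rangle + v|1\rangle)/\sqrt{3}^k$ with $u,v\in\Z[\omega_3]$ correspond exactly to the points of the scaled Eisenstein lattice $\mathcal{E}_k$, and that being within distance $\varepsilon$ of $|\psi\rangle$ is equivalent, by the Observation in Section~\ref{subsec:meniscus}, to lying in the meniscus $M_\varepsilon(r[\psi])$. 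Rescaling by $\sqrt{3}^k$, the problem becomes: enumerate $\mathcal{E}\cap \sqrt{3}^k M_\varepsilon(p)$ where $\mathcal{E}$ is the fixed (unscaled) rank-$4$ Eisenstein lattice and $p = r[\psi]/|r[\psi]|$. Since $|\psi\rangle$ is a unit vector, $|r[\psi]|=1$ and $p = r[\psi]$.

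Next I would bound the geometry of the scaled meniscus $M = \sqrt{3}^k M_\varepsilon(p)$. It is the intersection of the ball of radius $\sqrt{3}^k$ with the half-space $\langle q,p\rangle > \sqrt{3}^k(1-\varepsilon^2/2)$. Its extent in the direction $p$ is $\sqrt{3}^k\varepsilon^2/2$ and its extent in each orthogonal direction is $O(\sqrt{3}^{k/2}\varepsilon)$; for $k$ in $5/2\,\log_3(1/\varepsilon) + O(\log\log 1/\varepsilon)$ the orthogonal widths are polylogarithmic in $1/\varepsilon$ and the $p$-width is $O(\varepsilon^{3/4}\mathrm{polylog})$, as already noted in Section~\ref{subsec:unduc:except}. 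The key point is that $M$ is contained in a box (a product of a short segment along $p$ and a polylogarithmically-sized $3$-cube orthogonal to $p$) whose volume is polylogarithmic in $1/\varepsilon$. I would then apply Lenstra's integer programming algorithm (valid in fixed dimension $4$, hence polynomial-time in the bit-size of the description of the convex body) to the convex body $M$ relative to the lattice $\mathcal{E}$: Lenstra's method either certifies $\mathcal{E}\cap M=\emptyset$ or produces a lattice point, and by slicing / recursion on the flat direction one enumerates all of $\mathcal{E}\cap M$. The bit-size of $M$ (given $p$ to sufficient precision, $\varepsilon$, and $k$) is $O(\log(1/\varepsilon))$, so each Lenstra call costs $\mathrm{poly}(\log 1/\varepsilon)$ classical time. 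This proves the ``enumerated effectively and deterministically'' part for every $k$, with no genericity hypothesis.

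The remaining claim is the refined runtime bound for a pair $(|\psi\rangle,\varepsilon)$ in general position. Here ``general position'' should mean precisely that $(p,\varepsilon)$ is not strongly exceptional at level $\lambda$ in the sense of the definition preceding Algorithm~\ref{alg:iteration} — i.e. for $k$ in the stated range $\#|\mathcal{E}\cap\sqrt{3}^k M_\varepsilon(p)| < \varepsilon^{-\lambda}$, and in fact one wants this count to be polylogarithmic. I would argue as follows: by Lemma~\ref{lem:fritfully: multiply} and the dual-lattice / flatness analysis in Section~\ref{subsec:unduc:except}, for a non-exceptional pair the first $k$ for which the scaled meniscus contains more than one lattice point cannot be much below $4\log_3(1/\varepsilon)$, and for $k$ below $3\log_3(1/\varepsilon)$ the meniscus is flat enough that it contains at most one lattice point, found (if present) by a single roundoff along $p$ in the dual-lattice direction. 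Hence for $k$ in $5/2\,\log_3(1/\varepsilon)+O(\log\log 1/\varepsilon)$ — which sits strictly below that threshold — the set $\mathcal{E}\cap\sqrt{3}^k M_\varepsilon(p)$ has at most a polylogarithmic (indeed bounded, for generic $(p,\varepsilon)$) number of points and Lenstra's algorithm enumerates them in $\mathrm{poly}(\log 1/\varepsilon)$ time. The main obstacle, and the place where genericity is genuinely used, is exactly pinning down the quantitative ``non-exceptional'' hypothesis so that (i) the flatness estimate $k < 3\log_3(1/\varepsilon) \Rightarrow$ at most one lattice point is rigorous (this needs a lower bound on how close $\sqrt{3}^k p$ can come to a shifted hyperplane $H_\ell$ of the dual lattice, i.e. an irrationality/Diophantine condition on $p$ with respect to $\mathcal{E}^*$), and (ii) the enumeration count stays polylogarithmic — rather than proving these for a measure-one set of $p$, which I would do by a standard Borel–Cantelli argument over the countably many shifted hyperplanes, observing that the strongly exceptional pairs form a set of measure zero (indeed an open set of ``bad'' $\varepsilon$ for each fixed $p$, but Lebesgue-null in $(p,\varepsilon)$). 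I would close by assembling Lemma~\ref{lem:approx:subunitary} into Lemma~\ref{lem:state:approx:complex}: run Algorithm~\ref{alg:iteration} using the Lenstra-based enumeration as its inner loop; for each enumerated $(u,v)$ test $k$-feasibility of the norm equation~(\ref{eq:norm:eq}) (easy cases in $\mathrm{poly}(\log 1/\varepsilon)$, hard cases via the norm-equation oracle), and by Conjecture~\ref{cnj:norm:eq} a feasible point appears after inspecting $O(\log 1/\varepsilon)$ candidates, i.e. at some $k_2 = k_1 + O(\log\log 1/\varepsilon)$ still within the $5/2\,\log_3(1/\varepsilon)+O(\log\log 1/\varepsilon)$ regime.
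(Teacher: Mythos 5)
Your overall route is the same as the paper's: identify the approximants with points of the scaled Eisenstein lattice inside the meniscus, scale out by $\sqrt{3}^k$, enclose the body in a rational polytope, and enumerate integer points with a divide-and-conquer whose emptiness tests are done by Lenstra's fixed-dimension algorithm. However, two steps as written are wrong or missing. First, your geometry of the scaled meniscus is off: its width along $p$ is $\sqrt{3}^k\varepsilon^2/2\sim\varepsilon^{3/4}$, but its width in every orthogonal direction is $\Theta(\sqrt{3}^k\varepsilon)\sim\varepsilon^{-1/4}$ (the paper: $O(\varepsilon^{-1/4}\log(1/\varepsilon)^{\nu/4})$), not $O(\sqrt{3}^{k/2}\varepsilon)$ and certainly not polylogarithmic. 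Hence $M$ does not fit in a polylogarithmically-sized box, and your box argument for a polylog point count collapses; if it were true, no genericity hypothesis would be needed at all, which contradicts your own (and the paper's) later discussion that a flat $M$ may capture on the order of $\varepsilon^{-3/4}$ points of a rank-$3$ sublattice. The correct route, which the paper takes, is the $4$-volume bound (volume $\approx \varepsilon^{3/4}\cdot(\varepsilon^{-1/4})^3\cdot\mathrm{polylog}$, hence polylog) combined with the general-position (non--strongly-exceptional) hypothesis to exclude that degenerate configuration; your substitute claim that for generic pairs and $k<3\log_3(1/\varepsilon)$ the meniscus contains at most one lattice point is asserted rather than proved and is not needed. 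Relatedly, your Borel--Cantelli/measure-zero assertion about strongly exceptional pairs conflicts with the paper's observation (Section \ref{subsec:unduc:except}) that they form a nonempty open subset; the lemma simply assumes general position and requires no measure statement.

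Second, ``by slicing / recursion on the flat direction one enumerates all of $\mathcal{E}\cap M$'' is precisely the part that needs an argument: the flat direction contributes only $O(1)$ integer slices, and inside a $3$-dimensional slice of width $\sim\varepsilon^{-1/4}$ you must still bound how many bisections (equivalently, calls to the emptiness oracle on empty pieces) the recursion performs. The paper supplies this with a dedicated proposition in Section \ref{subsec:enumerating}: the recursion does at most $m\,(\lceil \log_2 w\rceil + 1)$ top-dimensional bisections, hence $O(m\,(\log w)^{3})$ oracle calls in total; since $w\sim\varepsilon^{-1/4}$ one has $\log w=\Theta(\log 1/\varepsilon)$, and with $m$ polylogarithmic (by the volume argument under general position) the whole enumeration is polylogarithmic in $1/\varepsilon$. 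Note also that Lenstra's algorithm is stated for rational polyhedra and the standard lattice, so you need the paper's intermediate steps of mapping $\mathcal{E}$ to $\mathbb{Z}^4$ by the explicit automorphism $\iota^{-1}$ and rounding the enclosing prism to rational vertices of bit size $O(\log(1/\varepsilon))$; your ``apply Lenstra to $M$ relative to $\mathcal{E}$'' glosses over this, though it is the same idea. With these repairs your argument coincides with the paper's proof.
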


As per our discussion in the subsection \ref{subsec:eisen:lattice}, this lemma can be translated into the lattice language as follows:

\begin{lem} \label{lem:enum:lattice}
In the context of Lemma \ref{lem:approx:subunitary} let $p=r[|\psi\rangle] \in \mathbb{R}^4$ be the corresponding 4-vector.

Given a $k$ in $5/2 \, \log_3(1/\varepsilon) + O(\log \log 1/\varepsilon)$, the points of the scaled Eisenstein lattice $\mathcal{E}_k$ within the meniscus $M_{\varepsilon}(p)$ can be enumerated effectively and deterministically in classical time that is polylogarithmic in $1/\varepsilon$.
\end{lem}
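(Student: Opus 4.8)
The plan is to reduce the enumeration to a lattice-point problem in the \emph{fixed} dimension $4$ and then run Lenstra's 1983 integer-geometry algorithm. First I would apply the homothety of ratio $\sqrt{3}^k$ centred at the origin, which puts $\mathcal{E}_k\cap M_\varepsilon(p)$ in bijection with $\mathcal{E}\cap M$, where $\mathcal{E}$ is the fixed unscaled Eisenstein lattice generated by $(1,0),(\omega_3,0),(0,1),(0,\omega_3)$ and $M=\sqrt{3}^k\,M_\varepsilon(p)$ is the spherical cap $\{q:|q|\le 3^{k/2}\}\cap\{q:\langle q,p\rangle>(1-\varepsilon^2/2)\,3^{k/2}\}$. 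Since $k=O(\log 1/\varepsilon)$, both $3^{k/2}$ and $1-\varepsilon^2/2$ are represented with $O(\log 1/\varepsilon)$ bits, and I would replace the unit vector $p$ by a rational approximation $\tilde p$ of the same bit-length. This perturbs $M$ only inside a shell of width $\ll\varepsilon^{3/4}$ around $\partial M$, hence neither gains nor loses lattice points provided no point of $\mathcal{E}$ lies in that shell --- guaranteeing this is one of the two jobs the ``general position'' hypothesis has to do. After this step the pair $(\mathcal{E},M)$ is described by $\mathrm{poly}\log(1/\varepsilon)$ bits.

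Next I would run Lenstra's algorithm in enumeration mode, built on LLL-type basis reduction (see \cite{LLL}, Chapter~2, for the reduction ingredients). At a generic node one holds a $d$-dimensional convex body and a rank-$d$ lattice, $d\le 4$; reduction yields a lattice direction realising, up to a factor depending only on $d$, the lattice width $W$ of the body, so the body lies in at most $\lceil W\rceil+1$ consecutive parallel lattice hyperplanes; intersecting with each gives a $(d-1)$-dimensional instance on a rank-$(d-1)$ lattice, on which one recurses, the base case $d=1$ being enumeration of integers in an interval. Every node costs $\mathrm{poly}$ in the current bit-length, hence $\mathrm{poly}\log(1/\varepsilon)$, and the number of nodes is bounded by a dimension-only constant times $1+\#(\mathcal{E}\cap M)$; the procedure is deterministic, and running it with the ``too many points'' cutoff of Algorithm~\ref{alg:iteration} makes it halt cleanly in every case.

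It remains to show $\#(\mathcal{E}\cap M)=\mathrm{poly}\log(1/\varepsilon)$ for $k$ in $5/2\,\log_3(1/\varepsilon)+O(\log\log 1/\varepsilon)$. As recorded in \S\ref{subsec:unduc:except}, for such $k$ the $4$-volume of $M$ is $O(\log(1/\varepsilon)^{\nu})$ while its width along $p$ is $O(\varepsilon^{3/4}\log(1/\varepsilon)^{\nu/4})\to0$, so $M$ is an asymptotically flat ``pancake''. Here I would argue that, for $(|\psi\rangle,\varepsilon)$ in general position, every primitive dual lattice vector pointing within angle $O(\varepsilon)$ of $\pm p$ has norm $\gtrsim\varepsilon^{-3/4}$; that bound makes every affine lattice $3$-plane cutting $M$ in a large ball a sublattice of covolume $\gtrsim\varepsilon^{-3/4}$, which cancels the $\Theta(\varepsilon^{-3/4})$ three-dimensional volume of the cap's base disk and leaves only $\mathrm{poly}\log(1/\varepsilon)$ points. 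The failure of this estimate is exactly the exceptionality of the example in \S\ref{subsec:unduc:except}, where $p$ is parallel to a short dual vector and one gets $\Theta(\varepsilon^{-3/4})$ points. Feeding the bound into the recursion gives a tree with $\mathrm{poly}\log(1/\varepsilon)$ nodes, hence total deterministic running time $\mathrm{poly}\log(1/\varepsilon)$.

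The hard part will not be the algorithm --- Lenstra's method applies verbatim in dimension $4$ and the bit-length bookkeeping is routine --- but the interplay with the aggressive exponent $5/2$: because the meniscus then degenerates to a pancake (three long directions of size $\sim\varepsilon^{-1/4}$, one of size $\sim\varepsilon^{3/4}$) the Gauss/volume heuristic for the lattice-point count fails, and the honest count swings between $O(1)$ and $\Theta(\varepsilon^{-3/4})$ according to the Diophantine behaviour of $p$ relative to the dual Eisenstein lattice. The crux is therefore to formulate ``general position'' so that it simultaneously keeps every lattice point of $M$ a definite distance from $\partial M$ (needed to rationalise $p$) and forbids a near-parallel short dual vector (needed to keep the output, hence the recursion tree, poly-logarithmic); once that is pinned down the rest follows.
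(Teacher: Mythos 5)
Your proposal is correct and takes essentially the same route as the paper: rescale to the unscaled Eisenstein lattice and the pancake-shaped meniscus, note the polylogarithmic volume and $O(\log 1/\varepsilon)$ bit sizes, and enumerate via Lenstra's 1983 fixed-dimension machinery with the recursion-tree size tied to the number of lattice points, with general position invoked to keep that count (and hence the runtime) polylogarithmic. The differences are matters of execution rather than approach: the paper encloses the scaled meniscus in a rational polytope, maps $\mathcal{E}$ to $\mathbb{Z}^4$ by the explicit automorphism $\iota^{-1}$, and drives a bisection recursion through a black-box feasibility oracle with an $O(m(\log w)^{3})$ call bound (subsection \ref{subsec:enumerating}), whereas you rationalize $p$ directly and slice by lattice width; and your dual-lattice-vector justification of the polylog point count is actually more explicit than the paper's volume assertion, which delegates the failure modes to the strongly-exceptional-pair discussion of subsection \ref{subsec:unduc:except}.
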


It is convenient to apply the scaling factor of $\sqrt{3}^k$ to the entire geometry of the Lemma, consider the scaled-out meniscus $M=\sqrt{3}^k \, M_{\varepsilon}(p)$ and the unscaled Eisenstein lattice $\mathcal{E}$ generated by $(1,0),(\omega_3,0),(0,1),(0,\omega_3)$. The enumeration task in the Lemma then
immediately translates into enumeration of the points in $\mathcal{E} \cap (\sqrt{3}^k \, M_{\varepsilon}(p))$.

We observe first that for the $k$ from Lemma \ref{lem:enum:lattice} the Euclidean 4-volume of the $M=\sqrt{3}^k \, M_{\varepsilon}(p)$ is polynomial in  $\log(1/\varepsilon)$ that therefore the number of the lattice points is upper-bounded by $\log(1/\varepsilon)^{\nu}$ where $\nu$ depends on the actual constants in the
$O(\log \log 1/\varepsilon)$ asymptotics term. The task is to be able to iterate through these points (if any) in time, polynomial in $\log(1/\varepsilon)$.

Our solution for this task takes some leveraging of the existing integer geometry tools but surprisingly little effort otherwise.

Let us look at the Euclidean dimensions of the scaled-out meniscus. Its width in the direction of $p$ is in $O(\varepsilon^{3/4}\, \log(1/\varepsilon)^{\nu/4})$ and thus asymptotically small, while its width in any orthogonal direction is in $O(\varepsilon^{-1/4}\, \log(1/\varepsilon)^{\nu/4})$ and thus asymptotically large.

A simple  observation is that we can enclose $M$ into a 4-dimensional polytope such that the Euclidean size of the polytope is some constant factor times the size of $M$.

If necessary we can also ensure that the vertices of such polytope are encoded by rational numbers and that  the bit sizes of these rational numbers are in $O(\log 1/\varepsilon)$.
For example, we can initially enclose $M$ into a 4-dimensional prism with real vertices. The 4-volume of such prism can be made less than 5 times larger than the 4-volume of $M$.

When rational  polytope is required , we would need to potentially round out the vertices of the prism to make them rational vectors. Because $M$ is only roughly $O(\varepsilon^{3/4})$ wide in one direction it would be best to keep the rounding precision at around $\varepsilon$ in order to not blow up the width, but in any case this can be done with the denominators of the rational numbers involved well within the magnitude in $O(1/\varepsilon)$ and thus their bit sizes logarithmic in $1/\varepsilon$.

To get better transparency in what follows let us tweak the problem setting one more time. Consider the following linear automorphism

$\iota : \mathbb{R}^4 \rightarrow \mathbb{R}^4$, $\iota(e_1)=e_1,\, \iota(e_2)=-1/2 \, e_1 + \sqrt{3}/2 \, e_2, \iota(e_3)=e_3,\, \iota(e_4)=-1/2 \, e_3 + \sqrt{3}/2 \, e_4$.

By design $\iota^{-1}$ maps the Eisenstein grid into the standard $\mathbb{Z}^4$ grid. It is not an isometry but it can only change angles, areas and volumes by small constant factors. Thus the widths and sizes of $\iota^{-1}(M)$ and those of its enclosing polytopes have the same asymptotics in $\varepsilon$ as before.

Finally, let $P$ stand for the 4-dimensional polytope containing the scaled-out meniscus $\iota^{-1}(M)$. Because Euclidean 4-volume of $P$ can constrained to be within some known constant factor $F$ of the 4-volume of $\iota^{-1}(M)$, an upper bound for the number of potential $\mathbb{Z}^4$ lattice points in $P$ is roughly $F$ times the upper bound for the number of Eisenstein lattice points in $M$. Therefore if we could enumerate all the $\mathbb{Z}^4$ lattice points in $P$ we could enumerate all the lattice points in $M$ by skipping those points that do not belong to $\iota^{-1}(M)$. Thus \emph{we have reduced the enumeration problem to that of enumerating $\mathbb{Z}^4$ lattice points in a polytope} with known maximum overhead factor. We also reiterate that in case we round out the polytope to have rational vertices,  the bit sizes of the vertices are in $O(\log(1/\varepsilon))$.

There is a wealth of available algorithms and software for counting and identifying lattice points in polytopes, see e.g. \cite{Barvinok94}, \cite{Dyer}, \cite{Lenstra83}. In dimensions 2,3,4 in particular the efficiency and numerical stability of those algorithm is well understood \cite{DeLoera}.
For completeness a top level description of one enumeration algorithm is given in the next subsection.

The uptake of the discussion in the subsection \ref{subsec:enumerating} is the following:

suppose $m$ is the number of integer grid points in the polytope $P$ and $w$ is its largest Euclidean width in any direction;

then the grid points can be enumerated in at most $O(m\,(\log(w))^3)$ calls to a certain point-counting oracle $Em$;

such an oracle can be implemented with complexity polynomial in bit sizes of the polytope vertices.

As we have shown, in our specific case $m$ is in  $\log(1/\varepsilon)^{\nu}$, $w$ is in 

$O(\varepsilon^{-1/4}\, \log(1/\varepsilon)^{\nu/4})$ and the bit sizes of the vertices are logarithmic in $1/\varepsilon$. Thus the overall complexity of the grid point enumeration procedure is polynomial in $\log(1/\varepsilon)$.

\subsection{Enumerating integer grid points in a rational polytope} \label{subsec:enumerating}

Here we describe a quick and dirty divide and conquer strategy for enumerating the grid points based on
the existence of a certain \emph{feasibility oracle}.
%\emph{point counting} algorithms such as TODO:REF:BARVINOK or TODO:REF:DYER.

Suppose we have a Boolean oracle $Em$ such that, given a convex polytope $Q \in \mathbb{R}^m$ returns $Em(Q)=true$ if and only there are no interior integer grid points inside the polytope $Q$. Suppose that $Em$ runs in classical time that is also polynomial in bit size of the polytope description.

As a matter of principle we could take Lenstra's feasibility algorithm from \cite{Lenstra83} which is probably chronologically the first known general algorithm of this type.

%As an overkill we can use a point counting algorithm to implement such an oracle.

Let now $P$  be some rational convex polytope of rank $n$. If $Em(P)=true$, there is nothing to enumerate. Suppose $Em(P)=false$.

Clearly, when $n=1$ the task of enumerating all the grid points in $P$ is trivial and can be done by a straightforward iteration.

 When $n>1$, we select an arbitrary axial direction, e.g. $d=e_n$, and find integer $z \in \mathbb{Z}$ that is the integer point on the
 projection of $P$ onto $d$ closest to the middle of the projection. Cut $P$ by the hyperplane $H_z(d)$ orthogonal to $d$ at level $z$ and set $Q_z(d)=H_z(d) \cap P$. (In corner cases, this intersection might be empty. We allow empty polytopes like that and extend the oracle $Em$ to return $true$ in such cases.)

Let us call $P$ \emph{flat in direction $d$} if its width in the direction $d$ is $\leq 1$. If $P$ is flat in direction $d$ and (a) $Q_z(d)$ is non-empty then all (if any) interior grid points of $P$ are also interior grid points of $Q_z(d)$, otherwise (b) $P$ has no interior grid points.

As a very straightforward version of the interior grid point enumeration we can propose the following recursive

\begin{algorithm}[H]
\caption{$\mbox{IP}(n,P)$}
\label{alg:enumeration}
\algsetup{indent=2em}
\begin{algorithmic}[1]
\REQUIRE{$n>0$, $P$ a convex polytope in $n$ dimensions.}
\IF {$n=1$}
\RETURN {List of points by 1-D iteration.} \label{l1}x
\ENDIF
\IF {$\mbox{Em}(P)$}
\RETURN {Empty list.} \label{l2}
\ENDIF
\STATE {Select axial direction $d=e_k, \,1\leq k \leq n$}\label{l3}
\STATE {Select integer $z$ closest to median of the projection of $P$ onto $d$} \label{l4}
\STATE{$Q_z(d) \gets \{x_k=z\} \cap P$} \label{l5}
\IF{$P$ is flat in the direction $d$}
\RETURN {$\mbox{IP}(n-1,Q_z(d))$} \label{l6}
\ENDIF
\STATE{$(P_z',P_z'') \gets$ segments of $P$ by $\{x_k=z\}$} \label{l7}
\RETURN {$\mbox{IP}(n,P_z') \cup \mbox{IP}(n-1,Q_z(d)) \cup \mbox{IP}(n,P_z'')$} \label{l8}
\end{algorithmic}
\end{algorithm}

\begin{prop}
Suppose $n>1$.
In the context of the above algorithm, let $w$ be the maximum width of the polytope $P$ in any direction and let $m$ be the actual number of interior grid points in $P$. In course of full recursion to completion the procedure $\mbox{IP}(n,P)$ does the $n$-dimensional branching as described in lines \ref{l7}, \ref{l8} of the Algorithm \ref{alg:enumeration}  at most
\begin{equation} \label{eq:bound}
m \, (\lceil \log_2(w) \rceil+1)
\end{equation}
\noindent times.
\end{prop}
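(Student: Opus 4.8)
The plan is to analyze the recursion tree of $\mbox{IP}(n,P)$ and count how many internal nodes perform the ``$n$-dimensional branching'' of lines~\ref{l7}--\ref{l8}. First I would set up the right accounting: call a recursive call a \emph{split node} if it reaches line~\ref{l7} (i.e. $n>1$, $\mbox{Em}(P)=\mbox{false}$, and $P$ is not flat in the chosen direction $d$), so that it spawns the three children $\mbox{IP}(n,P_z')$, $\mbox{IP}(n-1,Q_z(d))$, $\mbox{IP}(n,P_z'')$. Every split node is passed a polytope containing at least one interior grid point, since otherwise $\mbox{Em}$ would have returned true at line~\ref{l2}. The key structural observation is that when we cut $P$ at the integer level $z$ chosen closest to the median of the projection onto $d$, the two same-dimension children $P_z'$ and $P_z''$ are \emph{strictly smaller} in the $d$-direction: each has width in direction $d$ at most $\lceil \operatorname{width}_d(P)/2\rceil$ (the median cut roughly halves the projection, and the integer rounding costs at most a factor accounted for by the ceiling). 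Crucially, the grid points of $P$ on the hyperplane $\{x_k=z\}$ are handed entirely to the $(n-1)$-dimensional child $Q_z(d)$ and are removed from both $P_z'$ and $P_z''$; so the multiset of interior grid points is partitioned among the three children, with $Q_z(d)$ receiving at least the ones on the cut hyperplane.

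Next I would run a double induction / amortized counting argument. Fix $n$. I claim: for a polytope $P$ of dimension $n$ with $m\ge 1$ interior grid points and width $w$ in the worst direction (equivalently in every direction we ever cut, which is bounded by $w$), the number of split nodes in the subtree rooted at $\mbox{IP}(n,P)$ is at most $m(\lceil\log_2 w\rceil + 1)$. The induction is on the width $w$ (with the dimension $n$ fixed, since the same-dimension recursions never increase the width in the cut direction and the $(n-1)$-dimensional recursions are handled by a separate, already-established bound). Base case: if $\operatorname{width}_d(P)\le 1$ for the chosen $d$, then $P$ is flat, we take the flat branch at line~\ref{l6}, and there are zero split nodes in this call; this matches $\lceil\log_2 w\rceil+1 \ge 1$ trivially, and more importantly it is the termination case of the width-halving. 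Inductive step: at a split node with width $w>1$, we produce one split here, plus the split nodes in the subtrees of $P_z'$ and $P_z''$, each of which has width $\le \lceil w/2\rceil$; if $P_z'$ has $m'$ interior grid points and $P_z''$ has $m''$, then $m' + m'' \le m - (\text{grid points on the cut hyperplane}) \le m-?$. Here is the subtlety I would have to handle carefully: it is possible that $m'=0$ or $m''=0$ (the cut hyperplane may carry most or all of the grid points), in which case that subtree contributes $0$ split nodes (caught immediately by $\mbox{Em}$), so we should only pay for children that are actually nonempty. Writing $a = [m'\ge 1]$, $b=[m''\ge 1]$, the count of split nodes is $\le 1 + a\cdot m'(\lceil\log_2\lceil w/2\rceil\rceil+1) + b\cdot m''(\lceil\log_2\lceil w/2\rceil\rceil+1)$. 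Using $\lceil\log_2\lceil w/2\rceil\rceil \le \lceil\log_2 w\rceil - 1$ for $w>1$ and $m'+m''\le m - 1$ (at least one grid point of $P$ lies on the cut hyperplane $\{x_k=z\}$, because the median-closest integer level always lands inside the grid-point spread when $w>1$ — this needs a short argument), we get the count $\le 1 + (m-1)\lceil\log_2 w\rceil \le m(\lceil\log_2 w\rceil+1)$, closing the induction.

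The step I expect to be the main obstacle is justifying that the cut hyperplane $\{x_k=z\}$ contains at least one interior grid point of $P$ whenever $w>1$ — equivalently, that the ``charge'' $m'+m''\le m-1$ is genuinely available at every split node, so that the grid points pay for the branching. This is intuitively true because $z$ is chosen at the integer closest to the median of the projection, and a width $>1$ interval of reals contains an integer that is hit by a grid point of $P$; but the honest argument must rule out degenerate configurations where $P$ is wide in direction $d$ yet all its grid points project to integers strictly on one side of $z$ (forcing one of $P_z', P_z''$ to be empty \emph{and} the cut to be grid-point-free). I would handle this by observing that $z$ lies within distance $1/2$ of the median of the \emph{real} projection and that the projection of the grid points onto $d$ is an integer set whose span has length $\ge w - 1 \ge 0$; combined with the fact that $\mbox{Em}(P)=\mbox{false}$ forces at least one grid point, a careful case check shows either the cut carries a grid point (giving the $m-1$ bound directly) or both $P_z', P_z''$ are grid-point-free and $P$ was actually flat — contradiction. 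If this charging lemma needs a weaker form, one can instead absorb the lost factor into the ``$+1$'' of $\lceil\log_2 w\rceil+1$, which still yields the stated bound~\eqref{eq:bound}; so the proposition is robust to the exact constants.
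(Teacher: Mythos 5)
There is a genuine gap, and it sits exactly where you flagged it: the charging claim $m'+m''\le m-1$, i.e.\ that the cut hyperplane $\{x_k=z\}$ always contains an interior grid point of $P$ whenever $P$ is non-flat and $\mbox{Em}(P)=\mbox{false}$, is false, and your proposed case check does not rescue it. The configuration you rule out (``both $P_z'$ and $P_z''$ grid-free, hence $P$ flat'') is not the problematic one; the problematic one is that the cut is grid-free while \emph{one} side inherits \emph{all} $m$ grid points. Concretely, take a long thin polytope in $\mathbb{R}^2$ whose projection onto $e_1$ is $[0,10]$ but whose interior grid points all have $x_1\in\{8,9\}$ (e.g.\ a thin triangle widening toward the right end): the integer closest to the median is $z=5$, the hyperplane $\{x_1=5\}$ meets $P$ but carries no grid point, so $m'=0$ and $m''=m$. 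Your inductive step $\le 1+(m-1)(\lceil\log_2 w\rceil)$ then has nothing to charge the current branching to, and such ``all the mass on one side'' cuts can repeat on the order of $\log_2 w$ times before the enclosing piece becomes flat, so the loss cannot be absorbed into a single ``$+1$'' as your fallback sentence suggests. Indeed the factor $\lceil\log_2 w\rceil+1$ in the statement exists precisely because a single grid point can be responsible for that many bisections; any scheme that charges each branching to a distinct grid point (your $m-1$ decrement) would prove a bound of roughly $m$ plus lower order, which is too strong and hence unprovable by that route.

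The paper's proof avoids this by a different accounting: a main induction on $m$ together with a subordinate induction on the width (via $\lceil\log_2 w\rceil$, with the range $1<w<2$ treated as a separate base case), using only $m'+m''\le m$. At a split, both pieces have width less than $w/2+1/2$, so $\log_2$ of the width drops by a fixed increment; the degenerate case $m'=0$, $m''=m$ is singled out explicitly, and there the current bisection is paid for by the width decrease (the empty side being killed immediately by $\mbox{Em}$), not by a decrease in $m$; when both $m',m''\ge 1$ each is strictly below $m$ and the main induction applies. If you want to keep your width-based induction, the repair is essentially to adopt this case split and drop the charging lemma: prove the bound assuming only $m'+m''\le m$, letting the halving of the width absorb the branchings along empty-side chains and letting the decrease of $m$ absorb the branchings where both sides are nonempty. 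As written, your central inductive step does not go through.
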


\begin{proof}
We are going to prove the proposition for a simplified procedure where we select the one and the same direction $d=e_k$ for cutting the $n$-dimensional polytopes every time.

Let us do induction by $m$.

If $m=0$, the recursion is terminated at node 2) and node 5) is visited zero times.

The induction hypothesis is that the bound (\ref{eq:bound}) is valid for any $m' < m$.

For technical reasons we want to treat separately the case when $1 < w < 2$ and hence $\lceil \log_2(w) \rceil = 1$. It is a simple geometric fact that $P$ may contain one or two $(n-1)$-dimensional slices of the form $Q_z(d)$ and in either case in might take up to two bisections to identify them. The worst-case (tightest) upper bound $2 = \lceil \log_2(w) \rceil + 1$ is required when $P$ contains exactly one interior grid point, but even in this case the number of bisections is equal to the upper bound.

In general case we will perform subordinate induction by $lw = \lceil \log_2(w) \rceil$. Situations where $lw \leq 1$ had been already discussed.

Suppose now that $\lceil \log_2(w) \rceil=v>1$ (which also implies $w>2$).
The subordinate induction hypothesis is that the formula (\ref{eq:bound}) is true for any $m' < m$ and also for the
interior grid point count of $m$ in polytopes whose log width is smaller than $\log_2(w)$ by at least some fixed increment $\Delta$.

 We cut $P$ into polytope sections $P',P''$ and let $w',w''$ be the respective widths of these polytopes in the direction $d$. Clearly, one of these widths is $\leq w/2$ and either of the widths is smaller than $w/2+1/2$. Suppose $m',m''$ are the respective true counts of interior grid points in $P',P''$.
 By design $m'+m'' \leq m$ ax
 nd, in particular $m' \leq m, m'' \leq m$.

 We observe the elementary fact that for any $w>2,\,$  $\log_2(w/2+1/2) < log_2(w)-\Delta$ where $\Delta=2-\log_2(3)$, for example.

 Let us dispose, specifically, of the case when one of the $m',m''$ is zero and the other is equal to $m$. Assume, w.l.o.g. that $m'=0$. Then
 $Em(P')=true$, $\mbox{IP}(n,P')$ recursively visits node 5) zero times and by the induction hypothesis $\mbox{IP}(n,P'')$ recursively visits node 5) at most
 $m \, (\lceil \log_2(w'') \rceil+1) \leq m \, (\lceil \log_2(w) \rceil+1)$ times.

 When neither $m'$ nor $m''$ is zero, then each of them is strictly less than $m$ and, by induction hypothesis $\mbox{IP}(n,P')$ and $\mbox{IP}(n,P'')$ jointly would have visited node 5) at most
 $m' (\lceil \log_2(w') \rceil+1) + m'' (\lceil \log_2(w'') \rceil+1)$ times by the completion of the recursion.

 From the above discussion it follows that each of the numbers $\lceil \log_2(w') \rceil, \, $ $\lceil \log_2(w'')\rceil$ is less or equal than $\lceil \log_2(w) \rceil$ therefore
 $m' (\lceil \log_2(w') \rceil+1) + $ 
 
 $m'' (\lceil \log_2(w'') \rceil+1) \leq$ $(m'+m'') \, (\lceil \log_2(w) \rceil +1)
 \leq m\, (\lceil \log_2(w) \rceil+1)$.

\end{proof}

Thus we have bounded the overall number of bisections by (\ref{eq:bound}) and the actual interior grid points are located on $(n-1)$-dimensional slices of the $Q_z(d)$ type. Given $n > 2$, to make the upper bound we assume that recursion into each $Q_z(d)$
 would requires essentially the same number of bisections (in $(n-1)$ dimensions).

Assuming that the enumeration of the grid points in a one-dimensional polytope has trivial cost, we thus conclude that the overall number of non-trivial operations and thus the number of calls to the oracle $Em$ performed by $\mbox{IP}(n,P)$ over all levels of recursion can be upper-bounded by $O(m\,(\log(w))^{n-1})$. This bound is a fairly ad hoc one, but it fits the polynomiality claim we have been looking for.

In dimension up to $4$ (which is the scope of its application to the metaplectic state synthesis) the algorithm is likely to be sufficiently performant in practice.

To make an enumeration algorithm practically more appealing, we should move away from using the feasibility oracle $\mbox{Em}$ as a black box. Consider constructive steps involved in building Lenstra's feasibility procedure in \cite{Lenstra83}.
Suppose, for simplicity, that $P$ is a generic simplex, i.e. the convex hull of $(n+1)$ points in general position. There is a simple closed-form linear transformation $\tau$ that transforms $P$ into the standard simplex spanned by the standard Euclidean basis $e_1, \ldots, e_n$.

Lenstra observes that, instead of characterizing integer grid point inside $P$ it suffices to characterize the intersection of the lattice $L=\mathcal{L}(b_1=\tau(e_1), \ldots, b_n=\tau(e_n))$ with the $\tau(P)$ which is the standard simplex with known Euclidean geometry. There is an effective constant $K$ depending only on the latter geometry such that in at most $K$ steps of known unit cost we either (a) find at least one lattice point $y \in L \cap \tau(P)$ or, (b) we prove that the intersection is empty.

Suppose it is (a). By the time we have found a  $y \in L \cap \tau(P)$ we would have LLL-reduced the basis $b_1, \ldots, b_n$ to a nicer, ``more orthogonal'' basis $b'_1, \ldots, b'_n$ worthwhile holding on to. We can recursively search for more points of $L \cap \tau(P)$ on some hyperplane $H$ passing through $y$ and orthogonal to, say, $b_n$. After saving out these points, we can bisect the $\tau(P)$ and recursively run the algorithm on the sections. Note that we never do such a bisection in any dimensionality unless we have just saved out at least one of the desired grid points. Therefore the total number of bisections after ultimate completion of the recursion does not exceed the original number of the grid points in $P$. This is likely to lead to a substantial practical improvement over the algorithm \ref{alg:enumeration}.

\section{Important special case: the Clifford+$P_9$ basis} \label{sec:P9}

Research in \cite{BCRS} emphasized the benefits of a certain universal quantum basis formed by the ternary Clifford group and the diagonal gate
$P_9 = e^{-2 \pi i/9} |0\rangle \langle 0| + |1\rangle \langle 1| + e^{2 \pi i/9} |2\rangle \langle 2|$.

The basis provides a more natural language for describing important reversible classical circuits such as natural and modular integer adders, multipliers and modular exponentiation. However the $P_9$ gate is not necessarily the best for synthesis of non-classical gates and also the gate itself requires rather costly magic state distillation.

In this section we show that the magic state $\mu$ required for the $P_9$ can be emulated to required fidelity very efficiently in Clifford+$R$ framework. We do it this way since it allows to run the $P_9$ gate in constant online width while having the offline magic state preparation of moderate width that is asymptotically smaller than the width of known magic state distillation protocols.

Recall (\cite{CampbellEtAl}) that the $P_9$ can be performed by a deterministic constant depth state injection protocol where the magic state injected is
\[
\mu = (e^{-2 \pi i/9} |0\rangle  + |1\rangle  + e^{2 \pi i/9} |2\rangle)/\sqrt{3}
\]

Instead of distilling $\mu$ with some distillation protocol, $\mu$ can be prepared offline to a certain precision as
$\tilde{P_9}\,H\,|0\rangle$ where $\tilde{P_9}$ is an offline approximation of the $P_9$. A convenient representation of $P_9$ is $\tau_{0,2} R_{\phi}$ where $\tau_{0,2}$ is a Clifford transposition and $R_{\phi}$ is the Householder reflection about the two-level vector $\phi = (-e^{-\pi i/9} |0\rangle  + e^{\pi i/9} |2\rangle)/\sqrt{2}$. Suppose $c_{\phi}$ is a metaplectic circuit that prepares $\phi$ from $|2\rangle$. Then $R_{\phi} = c_{\phi}\,R\,c_{\phi}^{-1}$ and the $R$-count of the latter circuit is less or equal than $2\, \mbox{R-count}(c_{\phi}) +1$.

In this design line it is paramount to optimally approximate $\phi$ by a metaplectic state to an arbitrary desired precision. It follows from the analysis below that $\phi$ is not a strongly exceptional state. However, it turns out to be to be somewhat exceptional: a theoretical bound established in Proposition \ref{prop:phi;approx} claims that lattice approximations of $\phi$ are $1/2 \, \log_3(1/\varepsilon)$ more costly than the uniform asymptotic lower bound on the $R$-count; numerical simulation suggests that available lattice approximations of $\phi$ specifically are in line with that cost estimate.

\begin{prop} \label{prop:phi;approx}
For arbitrarily small $\varepsilon$ a certain $\varepsilon$-approximation of the state $\phi = (-e^{-\pi i/9} |0\rangle  + e^{\pi i/9} |2\rangle)/\sqrt{2}$ can be prepared by a metaplectic circuit with the $R$-count at most $3 \, \log_3(1/\varepsilon)+ O(\log \log 1/\varepsilon)$.

Such a circuit of optimal overall depth can be synthesized classically in classical time that is polynomial in
$\log(1/\varepsilon)$.
\end{prop}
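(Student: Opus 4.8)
The plan is to reduce the statement to the machinery of Lemma~\ref{lem:state:approx:complex} by a Clifford change of basis and then exploit a very special coordinate identity satisfied by $\phi$. First I would apply the Clifford permutation exchanging $|1\rangle$ and $|2\rangle$ (which, as recalled in Section~\ref{sec:prelim}, lies in the ternary Clifford group), turning $\phi$ into the two-level vector $\psi = y_0\,|0\rangle + y_1\,|1\rangle$ with $y_0 = -e^{-\pi i/9}/\sqrt2 = \omega_9^{4}/\sqrt2$ and $y_1 = e^{\pi i/9}/\sqrt2$, where $\omega_9 = e^{2\pi i/9}$; the point is that $y_1 = -\overline{y_0}$ and $|y_0|=|y_1|=1/\sqrt2$, so $p := r[\psi]$ lies in the $2$-plane $\Pi = \{(z,-\overline z):z\in\mathbb{C}\}\subset\mathbb{C}^2\cong\mathbb{R}^4$, i.e. $\psi$ is precisely \emph{not} in general position. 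Since a metaplectic circuit preparing $\psi$ from a basis state, composed with the inverse Clifford permutation, prepares $\phi$ at the same $R$-count and the same overall depth up to an additive constant, it suffices to prove both assertions for $\psi$.

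Next I would perform the key substitution. For a candidate state $(u\,|0\rangle+v\,|1\rangle+t\,|2\rangle)/\sqrt3^{k}$ of the form (\ref{eq:exact:state}) that is $\varepsilon$-close to $\psi$, set $s := u-\overline v$ and $c := u+\overline v$, both in $\mathbb{Z}[\omega_3]$ (which is closed under conjugation). Because $y_0+\overline{y_1}=0$ and $y_0-\overline{y_1}=2y_0$, closeness forces $c = O(\varepsilon\sqrt3^{k})$ and pins $s = 2\sqrt3^{k}\,y_0 + O(\varepsilon\sqrt3^{k})$, i.e. $\arg s$ within $O(\varepsilon)$ of $\arg y_0 = 8\pi/9$ and $N(s)$ within $O(3^{k}\varepsilon^{2})$ of $2\cdot3^{k}$. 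From $u=(c+s)/2$, $v=\overline{(c-s)}/2$ the norm equation (\ref{eq:norm:eq}) becomes $N(s)+N(c)+2N(t)=2\cdot3^{k}$ together with the side condition $c\equiv s\pmod{2\mathbb{Z}[\omega_3]}$; so feasibility amounts to representing the small integer $M_s := 2\cdot3^{k}-N(s)=O(3^{k}\varepsilon^{2})$ by the quaternary form $N(c)+2N(t)$ in the fixed residue class of $s$. The upshot: unlike a target in general position, where $(u,v)$ ranges over a genuinely four-dimensional box carrying $\Theta(3^{2k}\varepsilon^{5})$ feasible lattice points (the source of $C=5/2$), for $\psi$ the search collapses to the single "large" Eisenstein integer $s$ confined to a thin arc of an annulus, with the remaining small data $c,t$ costing only a polylogarithmic factor.

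With this in place the upper bound becomes a counting/equidistribution argument. The region to which $s$ is confined has angular width $\Theta(\varepsilon)$, radial thickness $\Theta(\sqrt3^{k}\varepsilon^{2})$ and radius $\Theta(\sqrt3^{k})$, hence area $\Theta(3^{k}\varepsilon^{3})$. Using that $8\pi/9$ (equivalently $\cos 2\pi/9$) and $\sqrt3$ are algebraic, hence of finite irrationality measure, the hexagonal lattice $\mathbb{Z}[\omega_3]$ equidistributes in such arcs at a polynomial-in-$k$ rate, so as soon as $3^{k}\varepsilon^{3}$ exceeds a fixed power of $\log(1/\varepsilon)$ — i.e. $k\le 3\log_3(1/\varepsilon)+O(\log\log 1/\varepsilon)$ — the arc contains $\Omega(\mathrm{polylog}(1/\varepsilon))$ admissible $s$, and for all but a $1/\mathrm{polylog}$-density exceptional set of them $M_s$ is represented by $N(c)+2N(t)$ in the required class (the quaternary form being nearly universal; cf. Conjecture~\ref{cnj:norm:eq} and \cite{Sardari} for the analogous counts). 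Bootstrapping such an $(s,c,t)$ as in Section~\ref{subsec:eisen:lattice} gives a state of the form (\ref{eq:exact:state}) within $\varepsilon$ of $\psi$ at that $k$, hence a metaplectic circuit of $R$-count at most $k+1\le 3\log_3(1/\varepsilon)+O(\log\log 1/\varepsilon)$. Conversely, the same area estimate $\Theta(3^{k}\varepsilon^{3})$ shows that at the threshold the number of feasibility candidates is only $\mathrm{polylog}(1/\varepsilon)$; in particular $(p,\varepsilon)$ is not strongly exceptional, Algorithm~\ref{alg:iteration} never reports an exception, and run together with the grid-point enumeration of Section~\ref{subsec:enumerating} (cost $O(m(\log w)^{3})$ oracle calls, here with $m=\mathrm{polylog}(1/\varepsilon)$, $w=O(\varepsilon^{-1/4}\,\mathrm{polylog})$ and $O(\log 1/\varepsilon)$-bit vertices) it returns the minimal feasible $k$ — hence a circuit of optimal overall depth — in classical time polynomial in $\log(1/\varepsilon)$ (times one norm-equation-oracle call per candidate).

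The hard part is the equidistribution input in the third paragraph: Eisenstein integers whose argument is near the cubic irrational $8\pi/9$ and whose norm lies in a window of relative width $\Theta(\varepsilon^{2})$ around the irrational level $2\cdot3^{k}$ must be counted precisely enough to guarantee the arc is non-empty — indeed carries $\gtrsim\mathrm{polylog}$ points — by $k=3\log_3(1/\varepsilon)+O(\log\log 1/\varepsilon)$, while at the same time not overshooting (which would make $\phi$ strongly exceptional, and is ruled out precisely by the $\Theta(3^{k}\varepsilon^{3})$ bound on the arc area). This is a Linnik/Duke-type statement for which effective forms exist, but the bookkeeping for this specific direction and these specific levels is where the real work sits; the numerical experiments cited confirm that the resulting constant $C=3$ is attained. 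A lighter secondary point is to check that the congruence $c\equiv s\pmod{2\mathbb{Z}[\omega_3]}$ and the near-universality of $N(c)+2N(t)$ only shift $k$ by $O(1)$. Everything else — the Clifford reduction, the $(s,c)$ substitution, and the passage from a feasible lattice point to an optimal-depth circuit — is routine within the framework already developed.
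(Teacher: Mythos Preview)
Your core reduction coincides with the paper's: both exploit that $r[\phi]$ is orthogonal to a rank-$2$ Eisenstein sublattice, collapsing the approximation problem from four real dimensions to two. Your substitution $s=u-\bar v$, $c=u+\bar v$ is an explicit coordinatization of the paper's orthogonal decomposition --- your $s$-variable lives in the plane $P$ containing $p$, while your $c$-shifts $(u,v)\mapsto(u+a,v+\bar a)$ sweep out precisely the sublattice generated by the paper's $d_1,d_2$ orthogonal to $p$. For the $3\log_3(1/\varepsilon)$ bound the paper then simply projects to $P$, observes that the projected lattice has a fully reduced orthogonal basis, and cites the established $2$-dimensional analysis of \cite{Ross},\cite{BBG}; you instead re-derive the same constant from an equidistribution/irrationality-measure argument, which is heavier machinery arriving at the same place. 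Your reformulation of feasibility as representing $M_s=2\cdot 3^k-N(s)$ by the quaternary form $N(c)+2N(t)$ is the explicit version of the paper's ``shift by $k_1 d_1+k_2 d_2$ and invoke Conjecture~\ref{cnj:norm:eq}''.

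There is, however, a genuine gap in your polynomial-time argument. You infer from the $2$-D arc area $\Theta(3^k\varepsilon^3)$ that the pair $(p,\varepsilon)$ is not strongly exceptional and that Algorithm~\ref{alg:iteration} plus the enumeration of Section~\ref{subsec:enumerating} runs with $m=\mathrm{polylog}(1/\varepsilon)$. But strong exceptionality is defined via the \emph{four}-dimensional count $\#(\mathcal{E}\cap\sqrt{3}^{\,k}M_\varepsilon(p))$, and your own fibration gives this as $\sum_{s\in\mathrm{arc}}\#\{c\equiv s\ (\mathrm{mod}\ 2):N(c)\le M_s\}$. For every $\ell<k_0\approx 3\log_3(1/\varepsilon)$ the $2$-D arc is empty, hence so is the $4$-D meniscus; at $k_0$ the arc acquires its first lattice point $s_0$, with $M_{s_0}$ typically of order $3^{k_0}\varepsilon^2\sim\varepsilon^{-1}$, yielding on the order of $\varepsilon^{-1}$ admissible $c$. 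Thus $(p,\varepsilon)$ is in fact strongly exceptional at a level near $1>3/4$, Algorithm~\ref{alg:iteration} returns ``Exception'', and your value $m=\mathrm{polylog}$ is too small by a factor $\varepsilon^{-1}$. The repair is already implicit in your own setup and is exactly what the paper does: bypass the $4$-D search entirely, enumerate the polylog-many candidates $s$ in the $2$-D arc directly (a trivially polynomial $2$-D closest-vector task), and for each one vary $c$ to reach $k$-feasibility.
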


\begin{proof}
As a matter of proof, we will reduce the approximation task to one in a certain two-dimensional lattice with fully reduced orthogonal basis.

The $\mathbb{R}^4$ representation of the vector $\phi$ is
\[
p=(-\cos(\pi/9), \sin(\pi/9), \cos(\pi/9), \sin(\pi/9))/\sqrt{2}
\]
\noindent This vector not in general position with respect to the Eisenstein lattice: it is exactly orthogonal to the rank two sublattice generated by the following two vectors
\[
d_1 = v_1+v_3; d_2 = -v_1-2\,v_2 + v_3 + 2 \, v_4
\]
\noindent where $v_1,v_2,v_3,v_4$ are as defined by the equations (\ref{eq:v:basis}),(\ref{eq:v:basis:contd}).

Let $P$ be the two-dimensional Euclidean orthogonal complement to $\mbox{span}(d_1,d_2)$. Clearly, $p \in P$.
If $y$ is an Eisenstein lattice vector that approximates some $\sqrt{3}^k \,p, \, k \in \mathbb{Z}$ then the approximation precision depends only on the projection of $y$ onto $P$ in the first place.

It is important to note upfront, right here, that in case $y$ is not $k$-feasible, we can attempt making it $k$-feasible by adding some $k_1 \, d_1 + k_2 \, d_2, \, k_1,k_2 \in \mathbb{Z}$ to it. This does not affect the precision at all, but should (and does) move us to a $k$-feasible candidate from a possibly infeasible $y$. Assuming Conjecture \ref{cnj:norm:eq} such $k$-feasible $y+k_1 \, d_1 + k_2 \, d_2$ is found with probability arbitrarily close to 1 upon inspecting some $O(k)$ pairs $(k_1,k_2)$.

Assuming the feasibility is taken care of, the geometric part of the approximation problem is reduced the classical Closest Vector Problem for the target vector $\sqrt{3}^k \,p$ in the two-dimensional lattice that is the projection of the Eisenstein lattice onto the two-dimensional plane $P$, since, instead of enumerating the lattice points in $P$ we need to find just one point that is sufficiently close for the given $k$.

It is easy to observe that the vectors $p_1=(-1,0,1,0)/\sqrt{2},\, p_2=(0,1,0,1)/\sqrt{2}$ form a Euclidean orthonormal basis in $P$. By direct computation it is equally straightforward to establish that the orthogonal projection of the Eisenstein lattice $\mathcal{E}$ onto $P$ is the rank two lattice generated by
$b_1 = 1/\sqrt{2}\, p_1$ and $b_2 = \sqrt{3/2}\, p_2$ where the two generating vectors form a fully reduced orthogonal lattice basis. The optimal enumeration of such a simple lattice in an $\varepsilon$-neighborhood of a target vector is well understood since \cite{Ross},\cite{BBG}. When the goal is to $\varepsilon$-approximate a unit vector by a vector of the form $(a_1 \, b_1 + a_2 \, b_2)/\sqrt{3}^k, \, a_1,a_2,k \in \mathbb{Z}$ then the uniform asymptotic lower bound on $k$ is $3 \log_3(1/\varepsilon)$ and for non-exceptional target vectors this bound can be also met up to a residual $O(\log \log 1/\varepsilon)$ term. In terms of the two-dimensional basis $(b_1,b_2)$ our target vector
$\phi=\sqrt{2} ( \cos(\pi/9)\, b_1 + 1/\sqrt{3} \, \sin(\pi/9) \, b_2$ and non-exceptional in this context.

\end{proof}

\begin{observ} [Based on numerical simulation]
Numerical experiments based on ad hoc two-dimensional Closest Vector method suggest that, in the context of the Proposition \ref{prop:phi;approx} an $\varepsilon$-approximation of the state $\phi = (-e^{-\pi i/9} |0\rangle  + e^{\pi i/9} |2\rangle)/\sqrt{2}$ can be prepared by a metaplectic circuit with the $R$-count at most

\noindent ${\bf 3} \, \log_3(1/\varepsilon)+ O(\log \log 1/\varepsilon)$
\end{observ}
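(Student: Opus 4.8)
The plan is to exploit that $\phi$ sits in special position relative to the Eisenstein lattice, so the $4$-dimensional meniscus enumeration of Section~\ref{subsec:eisen:lattice} collapses to an essentially one-parameter Closest Vector Problem in a rank-$2$ \emph{rectangular} lattice, where optimal state synthesis is already completely understood; the $3\log_3(1/\varepsilon)$ leading term then falls straight out of that two-dimensional theory, and the numerics referenced in the Observation are just this synthesis loop run to completion.

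First I would take the $\mathbb{R}^4$ representative $p = r[\phi] = (-\cos(\pi/9),\sin(\pi/9),\cos(\pi/9),\sin(\pi/9))/\sqrt{2}$ and verify, by direct inner products against the basis $v_1,\dots,v_4$ of (\ref{eq:v:basis})--(\ref{eq:v:basis:contd}), that $p$ is orthogonal to $d_1 = v_1+v_3$ and to $d_2 = -v_1-2v_2+v_3+2v_4$, hence to the entire rank-$2$ sublattice $\Lambda = \operatorname{span}_{\mathbb{Z}}(d_1,d_2)$. By the $\varepsilon$-closeness criterion of Section~\ref{subsec:meniscus} the distance of a candidate $y/\sqrt{3}^k$ to $\phi$ depends only on $\langle p,y\rangle$, hence only on the orthogonal projection of $y$ onto the $2$-plane $P = \operatorname{span}(d_1,d_2)^{\perp}\ni p$. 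The consequence I will need later: replacing a lattice point $y$ by $y + m_1 d_1 + m_2 d_2$ changes neither the approximant nor the precision, so $\Lambda$ supplies a free family of ``feasibility perturbations''.

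Next I would compute $\pi_P(\mathcal{E})$. With the orthonormal frame $p_1 = (-1,0,1,0)/\sqrt{2}$, $p_2 = (0,1,0,1)/\sqrt{2}$ of $P$, projecting the Eisenstein generators $(1,0),(\omega_3,0),(0,1),(0,\omega_3)$ yields the rank-$2$ lattice $\mathbb{Z}b_1 + \mathbb{Z}b_2$ with $b_1 = \tfrac{1}{\sqrt{2}}\,p_1$ and $b_2 = \sqrt{3/2}\,p_2$; the Gram matrix is $\operatorname{diag}(1/2,\,3/2)$, so $(b_1,b_2)$ is already a fully reduced \emph{orthogonal} basis, and in it the target reads $p = \sqrt{2}\,\bigl(\cos(\pi/9)\,b_1 + \tfrac{1}{\sqrt{3}}\sin(\pi/9)\,b_2\bigr)$ --- a vector with no arithmetic alignment to the lattice, hence not exceptional in the sense relevant to this rank-$2$ problem. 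For such a target the rank-$2$ synthesis machinery of \cite{Ross},\cite{BBG} applies verbatim: to $\varepsilon$-approximate a unit vector by $(a_1 b_1 + a_2 b_2)/\sqrt{3}^k$ with $a_1,a_2,k \in \mathbb{Z}$ the uniform asymptotic lower bound is $k = 3\log_3(1/\varepsilon)$, it is attained up to an additive $O(\log\log 1/\varepsilon)$, and a minimizing triple is produced deterministically in time polynomial in $\log(1/\varepsilon)$ by a two-dimensional rounding/CVP step --- which is exactly the ``optimal overall depth, classical polynomial time'' half of the statement.

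It remains to turn a geometrically optimal lattice point into a genuine metaplectic state of the form (\ref{eq:approximant}), i.e.\ a $k$-feasible one. If the candidate $y$ at denominator exponent $k$ fails the norm equation (\ref{eq:norm:eq}), I would scan $(m_1,m_2)$ over a small integer box and test $y + m_1 d_1 + m_2 d_2$ instead: the approximant stays put while the residue $3^k - |u|^2 - |v|^2$ sweeps many values, so under Conjecture~\ref{cnj:norm:eq} a feasible representative appears after $O(k)$ trials with probability arbitrarily close to $1$; solving the then-easy norm equation for $w$, bootstrapping to $(u\,|0\rangle + v\,|1\rangle + w\,|2\rangle)/\sqrt{3}^k$, and invoking the bound of \cite{BCKW} ($R$-count at most $k+1$, overall depth within $2$ of optimal) delivers the advertised $3\log_3(1/\varepsilon) + O(\log\log 1/\varepsilon)$. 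I expect the genuine obstacle to be this feasibility step rather than the (largely bookkeeping) geometric reduction: because $p$ is degenerate --- orthogonal to the whole rank-$2$ sublattice $\Lambda$ --- one must rule out a congruence obstruction that would make \emph{every} coset representative $y + \Lambda$ infeasible at once, and, absent such an obstruction, the argument still leans on Conjecture~\ref{cnj:norm:eq} (or a Sardari-type unconditional surrogate, cf.\ \cite{Sardari}) to control the density of feasible perturbations; the numerical Observation is precisely the assertion that, for $\phi$, this loop does terminate with $R$-counts tracking $3\log_3(1/\varepsilon)$.
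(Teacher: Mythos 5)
Your argument is essentially the paper's own: the orthogonality of $r[\phi]$ to the rank-$2$ sublattice spanned by $d_1,d_2$, the reduction to a Closest Vector Problem in the projected orthogonal lattice $(b_1,b_2)$ with the $3\log_3(1/\varepsilon)$ scaling taken from the two-dimensional synthesis theory of \cite{Ross},\cite{BBG}, and the feasibility repair by shifts along $d_1,d_2$ under Conjecture~\ref{cnj:norm:eq} is precisely the proof given for Proposition~\ref{prop:phi;approx}. Note only that the Observation itself is an empirical statement which the paper supports solely by the numerical examples of Appendix~\ref{appendix:a}; you correctly recognize this, supplying the same theoretical scaffolding and identifying the Observation as the claim that the resulting loop indeed terminates with $R$-counts tracking $3\log_3(1/\varepsilon)$ in practice.
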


Examples of lattice points found in this experiment are presented in the Appendix \ref{appendix:a}.

\section{Concluding remarks}

We have shown that a single-qutrit two-level unitary state $|\psi\rangle$ in general position can be $\varepsilon$-approximated effectively by a metaplectic state of the form (\ref{eq:exact:state}) with $k$ in at most $5/2 \, \log_3(1/\varepsilon) + O(\log \log 1/\varepsilon)$. If a \emph{norm equation oracle} is available then an approximating state with truly minimal $k$ can be found. However even when the true minimal $k_{min}$ is known, the depth of the approximating circuit can take up one of the three values $\{2\,k_{min}-1,2\,k_{min},2\,k_{min}+1\}$. Therefore we will have to iterate through all the candidates with $k=k_{min}$ in order to determine, which of the three values is a circuit depth minimum.
Note that candidate states (\ref{eq:exact:state}) with $k>k_{min}$ are not going to generate approximation circuits with depth smaller than $2\,k_{min}+1$.

The problem of \emph{truly minimal $R$-count} is less well defined. While in the above context the $R$-count of a metaplectic circuit is upper-bounded by $k_{min}+1$, it can be in principle significantly smaller. In other words, the $R$-count is not a well-defined function of depth and a metaplectic circuit proxy of larger depth might have a smaller $R$-count.

Our specific way of synthesizing a circuit for the Householder reflection $R_{|\psi\rangle}$ is by reducing it to $R_{|2\rangle}$ which implies the existence of a circuit of depth no more than $2\,\ell+1$ for the reflection, where $\ell$ is the minimal depth of a metaplectic circuit to prepare $|\psi\rangle$.  We make no representation here that some of the reflections cannot be obtained by some completely different scheme at a smaller depth. (This issue remains open for future research.)

It is worthwhile to note that all the designs of section \ref{sec:main:lemma} can be applied with much more ease to circuit synthesis over the single-qubit $V$ basis (cf. \cite{BGS}). Ostensibly they could be more costly in practice than the custom algorithms from \cite{BBG} or \cite{Ross}, but this might be a price worth paying for conceptual transparency.

\section{Acknowledgements}

The Author wishes to thank V. Kluichnikov for valuable literature references: especially \cite{LLL},\cite{Sardari},  and for pointing out algorithmic versions of Khinchin's flatness theorem that provide for enumeration of lattice points in polynomial time. The Author also wishes to dedicate this research to his father, Vadim Bocharov.

\appendix
\section{Examples of approximation vectors for the magic state generator $|\phi\rangle$} \label{appendix:a}

Recall that the state $\phi = (-e^{-\pi i/9} |0\rangle  + e^{\pi i/9} |2\rangle)/\sqrt{2}$ introduced in section \ref{sec:P9} is used for emulating the important $P_9$ gate that has been extensively used for circuit synthesis in \cite{BCRS}.
If $\phi$ is prepared to some precision by a circuit of depth $\ell$ (resp. $R$-count $r$), then $P_9$ can be emulated to essentially the same precision by a corresponding circuit of depth at most $2\, \ell +1$ (resp. $R$-count at most $2\, r+1$). If the emulation is set up offline then using the $P_9$ gate available at certain fidelity one can prepare the ``$P_9$'' magic state $\mu$ offline at the same fidelity.
%$k=30$ $\{19068804, 4006785\}$ $-9.53$ $1.41$

%$k=60$ $\{273614088661424, 57496802915208\}$ $-19.973$ $0.08$

%$k=90$ $\{3926063111969316475316, 825016278023092749328\}$ $-29.527$ $1.42$

%$=\mbox{dist}(s/3^{k/2},|\phi\rangle)$

Table \ref{table:examples:proxy} below offers three examples of resource states for circutizing the state $|\phi\rangle$ to the best possible precision. Taking $(s,k)$ pair from each row one approximates $|\phi\rangle$ to precision $\varepsilon$ shown in the third column by $s/\sqrt{3}^k$. One adjusts $s$ with a small shift in order to get a $k$-feasible state $s'$ (definition \ref{def:k:feasible}) as shown in the last column. This adjustment does not change the precision. As per the definition, $s'/\sqrt{3}^k$ can be represented by a metaplectic circuit with $R$-count at most $k+1$ and overall depth between $2 k-1$ and $2 k +1$.

\begin{table}
    \begin{tabular}{c|c|c|c|c}
         $k$  &  \multicolumn{1}{c|}{State} &  \multicolumn{1}{c|}{$\varepsilon$} & \multicolumn{1}{c|} {$k+3\log_3(\varepsilon)$} & $k$-feasible state\\
    \hline
    30        & $s=(-7531010+ 4006784 \omega_3)|0\rangle + $                         & $3^{-9.53}$ & $1.41$ & $s'=s+$  \\
            & $(11537794+4006784\omega_3)|2\rangle$ & & & $4 (|0\rangle+|2\rangle)$\\
    60        & $s=(-108058642873108+ 57496802915208 \omega_3)|0\rangle + $  & $3^{-19.973}$  &$0.08$ & $s'=s+$  \\
            &$(165555445788316+ 57496802915208\omega_3)|2\rangle$ & & & $3 (|0\rangle+|2\rangle)$\\
    90        & $s=(-1550523416973111862994 +$   & $3^{-29.527}$ & $1.42$& $s'=s+$\\
            & $825016278023092749328 \omega_3)|0\rangle + $ & & & $27 (|0\rangle+|2\rangle)$ \\
            & $(2375539694996204612322+$ &  & & \\
            & $825016278023092749328\omega_3)|2\rangle$ & & &\\

    \end{tabular}
    \caption{Examples of proxy states for $|\phi\rangle$. The first column shows the scaling level $k$, the second column contains a closest unnormalized Eisenstein lattice state. The third column contains precision achievable at the selected scaling level: $\varepsilon =\mbox{dist}(s/3^{k/2},|\phi\rangle)$. The fourth column measures the residual terms between $k$ and $3\, \log_3(1/\varepsilon)$.The last column proposes a small shift that turns the proxy state $s$ into a $k$-feasible state.} \label{table:examples:proxy}
  \end{table}


\begin{thebibliography}{18}

\bibitem{BarencoEtAl1995}  A.~Barenco and al., Elementary Gates for Quantum Computation, Phys. Rev. A, 52, 3457 (1995)

\bibitem{Barvinok94} Barvinok, A.~I., Polynomial Time Algorithm for Counting Integral Points in Polyhedra whe the Dimension Is Fixed, Math. Oper. Res. 19, 769–-779, (1994)

\bibitem{BBG}
 Blass A., Bocharov A., Gurevich Y.: Optimal Ancilla-free Pauli+V Circuits for Axial Rotations,  J. Math. Phys. 56, 122201 (2015)

\bibitem{BCKW} Bocharov, A., Cui, S.~X., Kliuchnikov, V., Wang,Z.: Efficient Topological Compilation for Weakly-Integral Anyon Model.  arXiv:1504.03383  (2015)

\bibitem{BCRS} Bocharov, A., Cui, S.~X., Roetteler, M., Svore, K.~M.: Improved Quantum Ternary Arithmetics, arXiv:1512.03824  (2015)

\bibitem{BGS} Bocharov, A.,  Gurevich, Y., Svore, K.~M., Efficient decomposition of single-qubit gates into V basis circuits. Phys. Rev. A., 88(012313) , (2013)

\bibitem {BMS} Bocharov, A., Roetteler, M., Svore, K.~M.: Factoring with Qutrits. Shor's Algorithm on Ternary and Metaplectic Quantum Architectures, arXiv:1605.02756  (2015)

\bibitem{CampbellEtAl}
Campbell, E.~T., Anwar, H., Browne, D.~E.,
Magic-state distillation in all prime dimensions using quantum reed-muller codes.
 Phys. Rev. X, 2(4), 041021, 2012

 \bibitem{CuiHongWang} Cui, S.~X., Hong,S-M.,Wang,Z.: Universal quantum computation with weakly integral anyons.
 Quantum Information Processing, 14,2687--2727,  (2014)

\bibitem{CuiWang} Cui, S.~X., Wang,Z.: Universal Quantum Computation with Metaplectic Anyons, J. Math.Phys., 56(032202)(2015)

\bibitem{DeLoera} De~Loera, J.~A., Hemmecke, R., Tauzer, J, Yoshida, R.,
Effective lattice point counting in rational convex polytopes, J. Symbolic Computation, 38, p. 1273--1302, (2004)

\bibitem{Dyer} Dyer, M., Kannan, R., In Barvinok's Algorithm for Counting Lattice Points in Fixed Dimension, Math. Oper. Res., 22(3), (1995)

\bibitem{Howard}
Howard, M., Vala,J.:
Qudit versions of the qubit $\pi$/8 gate.
 Phys. Rev. A, 86(2), 022316, (2012)

\bibitem{LLL} Nguen, P.~Q., Valee, B. (Eds.), {\em The LLL Algotithm.}, Springer, Berlin-Heidelberg, (2010)


\bibitem {IkeAndMike2000} Nielsen, M.~A., Chuang, I.~L.: Quantum Computation and
  Quantum Information, Cambridge University Press, Cambridge, UK (2000)

\bibitem{Lenstra83}  Lenstra H.~W.~Jr.:
Integer programming with a fixed number of variables,  Math. Oper. Res., 8(4), (1983)


\bibitem{LWashington} Washington,L.: Introduction to Cyclotomic Fields, Springer,New-York (1997)

\bibitem{KBRY}  Kliuchnikov, V., Bocharov, A., Roetteler M., Yard, J.:  A Framework for Approximating Qubit Unitaries,  arXiv:1510.03888  (2015)


\bibitem{KliuYard} Kliuchnikov, V., Yard, J.:  A framework for exact synthesis, arXiv:1504.04350 (2015)

\bibitem{Ross} Ross, N.~J.: Optimal ancilla-free Clifford+V approximation of z-rotations, Quantum Information and Computation, 15 (11,12) (2015)

\bibitem{RoSelinger} Ross, N.~J.,  Selinger,P.: Optimal ancilla-free Clifford+T approximation of z-rotations, arXiv 1403.2975 (2014)

\bibitem{Sardari}
Sardari, N.~T., Optimal strong approximation for quadratic forms, arXiv 1510.00462v2 (2015)

\bibitem{Selinger}
Selinger, P.: Efficient {Clifford+T} approximation of single-qubit operators. arXiv:1212.6253, (2012)

\bibitem {Shor} Shor, P.~W.: Polynomial-time algorithms for prime factorization and discrete logarithms on a quantum computer, SIAM J. Comput., 26 (1484--1509) (1997)

\bibitem{WWJD} Uriais,J.: Householder factorizations of unitary matrices, Journal of Mathematical Physics. 51(7), 072204-072209 (2010)











\end{thebibliography}
\end{document}